\newcommand{\reff}[1]{{\rm (\ref{#1})}}
\newcommand{\R}{\mathbb{R}}            
\newcommand{\x}{\mbox{\boldmath$x$}}
\newtheorem{theorem}{Theorem}[section]
\newtheorem{lemma}[theorem]{Lemma}
\newtheorem{remark}[theorem]{Remark}
\def\XXint#1#2#3{{\setbox0=\hbox{$#1{#2#3}{\int}$}
\vcenter{\hbox{$#2#3$}}\kern-.51\wd0}}
\newcommand{\dd}{{ \rm d}}
\definecolor{Light Turquoise}{RGB}{84,135,120} 
\title{Mean-field Modeling of Social Interactions Using Classical Density Functional Theory
 }
\author{Ziheng Xu\thanks{School of Mathematical Sciences, MOE-LSC, and CMA-Shanghai, Shanghai Jiao Tong University, Shanghai, 200240, China. Email: cauchy\_schwarz@sjtu.edu.cn}
\and	
Shenggao Zhou\thanks{School of Mathematical Sciences, MOE-LSC, and CMA-Shanghai, Shanghai Jiao Tong University, Shanghai, 200240, China. Email: sgzhou@sjtu.edu.cn}}
\date{\today}
\begin{document}

\maketitle

\begin{abstract}
Incorporating social interactions is essential to an accurate modeling of epidemic spreading. This work proposes a novel local mean-field density functional theory model by using the sum-of-exponential approximation of convolution kernels for social interactions, which in turn converts
the convolution terms into interaction potentials that are governed by the Debye-H\"uckel equation. Thanks to the local formulation of the
proposed model, linear stability analysis is able to derive a novel instability condition associated with cross interactions. Global existence of the solution to the proposed model with a simplified self-repulsive interaction potential is established. Extensive numerical simulations are performed to 
assess the impact of cross social interactions on transmission and isolation, verify the instability conditions obtained from linear stability analysis, and provide theoretical guides for the control of disease spreading. 

\medskip
\noindent \textbf{Keywords:} Density functional theory; Mean-field approximation; Social interactions; Linear stability analysis
\end{abstract}


\section{Introduction} \label{Sec. introduction}
Mathematical modeling, particularly through differential equations, plays a pivotal role in modern epidemiology~\cite{murray_book02_vol1,murray_book02_vol2, Keeling_PUP08}. Mathematical models allow us to predict outbreak trajectories and assess control strategies before costly implementation. During the COVID-19 pandemic, such models enabled governments to project medical shortages and evaluate lockdown efficacy. Classical mathematical models of disease transmission predominantly adopt reaction–diffusion frameworks that describe population density through the Fickian diffusion and mass-action kinetics~\cite{WangCaiWu_NonlinearAnal12, GaiIron_JMB20, Banerjee_PhysD24}. Social interventions, such as the practice of social distancing and isolation of infected individuals, are often neglected in the reaction–diffusion modeling. The resulting models often overestimate spatial homogeneity and underestimate the impact of individual behaviors under social restrictions~\cite{tevrugt_NatCommun20}. 

To include individual effects, microscopic agent-based models that treat each person as a stochastic particle moving on dynamic contact networks are able to faithfully capture heterogeneity and social interactions. However, microscopic models are often computation demanding and provide little analytical insight, hindering their applications in policy formulation. Coarse-grained continuum theories based on statistical description of interacting particles are often more efficient than agent-based simulations, while still faithfully capturing individual interactions~\cite{Rosenfeld_PRL89, MarconiTarazona_JCP99, ArcherEvans_JCP04, Bickmann_JPCM20, tevrugt_AdvPhys20}. Dynamic models for interacting agents have been derived for various applications, such as the cancer growth~\cite{ArcherWard_PRE18}, active matter~\cite{Wensink_PRE08, Wittkowski_MP11, Wittmann_EPL16}, ecology~\cite{Lopez_PRL13}, and socio-economic systems~\cite{CarrilloVecil_Book, Calzola_PhysD24}, showing great promise in modeling epidemiology. 

The classical density functional theory (DFT) is an effective coarse-graining tool to derive macroscopic models with connection to microscopic dynamics~\cite{MarconiTarazona_JCP99, ArcherEvans_JCP04, tevrugt_AdvPhys20}. It begins with the microscopic Smoluchowski equation for many-particle probability density and integrates out all degrees of freedom except the one-body density field. The adiabatic approximation that two-body correlations relax much faster than one-body density yields a deterministic evolution equation. Recently, the dynamical density functional theory has been utilized to propose a coarse-grained model in a seminal work~\cite{tevrugt_NatCommun20}, to describe the epidemic spreading including the effects of social distancing and isolation. In addition to the Brownian diffusion process, the social interactions among interacting agents are described by a mean-field approximation through nonlocal convolution of the population density with an interaction kernel. The infectious reaction kinetics is modeled by the well-known susceptible-infected-recovered (SIR) model~\cite{Keeling_PUP08, GaiIron_JMB20, Banerjee_CSF23}. Coupling of such ingredients yields a time-dependent integro-differential system. Linear stability analysis performed on such a system derives an instability condition related to a well-known outbreak criterion ascribing to infectious reaction, whereas it cannot directly assess the impact of social interactions on instability due to its convolution formulation. 

Along the line of the mean-field approximation, this work proposes a local mean-field model for epidemic spreading with social interactions, based on the nonlocal dynamic density functional theory developed in~\cite{tevrugt_NatCommun20}. It starts with the sum-of-exponential (SOE) approximation of the interaction kernel of convolutions for social interactions. Such an approximation in turn converts the convolution terms into interaction potentials that are governed by the Debye-H\"uckel equation with sources arising from the corresponding density field. This results in a local DFT model (resembling classical chemotaxis models) with SIR reaction kinetics. Linear stability analysis on the proposed local model elucidates that the homogeneous steady state solution becomes unstable if the transmission rate is larger than the well-known outbreak criterion, which has been pointed out in the work~\cite{tevrugt_NatCommun20}. Furthermore, the linear stability analysis also derives a new instability condition associated with cross interactions, thanks to the local formulation of the proposed model. 

To further understand the newly derived model, we also study the global existence of the solution to the model with a simplified self-repulsive potential. Following the parabolic theory on chemotaxis-type systems, we first establish the $L^p$ estimate on the local solution and then the $L^\infty$ estimate following the technique of the Moser-Alikakos iteration. The global existence is finally proved using a relevant extensibility criterion. Extensive numerical simulations are performed to investigate the impact of social interactions on epidemic spreading. Numerical simulations with various combinations of interaction intensity present a phase diagram on instability of the homogeneous steady state solution, further verifying that the instability conditions derived in the linear stability analysis can robustly predict instability. In addition, numerical studies are conducted to assess the impact of cross social interactions on transmission, isolation, etc., providing theoretical guides for the control of disease spreading. 

This paper is structured as follows. In Section \ref{s:model}, we derive the SIR Debye-H\"uckel model based on the dynamical density functional theory. In Section \ref{Sec. Linear stability analysis}, we perform linear stability analysis on a homogeneous base state. In Section \ref{Sec. PDE analysis}, we present global existence of the solution to a simplified version of the model. 
In Section \ref{Sec. Numerical results}, we report numerical results. Finally, in Section \ref{Sec. conclusion}, we draw some conclusions.

\section{Model} \label{s:model}
\subsection{Density Functional Theory with SIR Reaction}\label{ss:DFT-SIR}
We model persons in epidemic spreading as interacting particles based on the time-dependent classical density functional theory~\cite{MarconiTarazona_JCP99, ArcherEvans_JCP04, tevrugt_NatCommun20}, which is well-known to be able to accurately describe the diffusive relaxation of interacting particles under assumptions that the motion of particles is Markovian and ergodic.  The theory has been successfully used to describe various applications, such as   active matter~\cite{Wensink_PRE08, Wittkowski_MP11, Wittmann_EPL16},
cancer growth~\cite{ArcherWard_PRE18}, ecology~\cite{Lopez_PRL13}, and collective behavior in socio-economic sciences~\cite{CarrilloVecil_Book, Calzola_PhysD24}.

Consider an interacting particle system that consists of $M$ species. Denote the density of each species of particles as  $\rho_j$ $(j=1,2,\cdots,M)$. Let $\rho=(\rho_1, \rho_2, \cdots, \rho_M)$. The free energy of the particle system is given by~\cite{tevrugt_NatCommun20} 
\begin{equation}\label{F}
\mathcal{F}[\rho] = \mathcal{F}_{\rm id}[\rho] + \mathcal{F}_{\rm exc}[\rho] + \mathcal{F}_{\rm ext}[\rho],
\end{equation}
where the first  term accounts for the entropy of particles described as ideal gas without interactions:
\[
\mathcal{F}_{\rm id} [\rho]= \beta^{-1} \sum_{j=1}^M \int \rho_j \left( \log(\Lambda^3 \rho_j)-1 \right) \dd \x.
\]
Here $\beta$ is the inverse thermal energy and $\Lambda$ is thermal de Broglie wavelength. The second term is the excess free energy to include long-range and short-range particle interactions. For instance,  the fundamental measure theory is proposed to express the excess Helmholtz free energy for hard-sphere fluids~\cite{Rosenfeld_PRL89}. Later, the fundamental measure theory is further extended to describe multicomponent hard-sphere mixture using the Mansoori--Carnahan--Starling--Leland equation of state~\cite{WhiteBear_JPCM02, YuWu_JCP02}. In the context of social interactions, it is more appropriate to use soft interaction potentials~\cite{tevrugt_NatCommun20}, rather than the hard-sphere potential, such as the Gaussian core model~\cite{LBH_PRE00, ArcherEvans_PRE01} that can be well approximated by the mean-field description.  With such description,  the excess free energy is given by 
\[
\mathcal{F}_{\rm exc} [\rho]= \frac{1}{2} \sum_{i=1}^M \sum_{j=1}^M \iint w_{ij}  \rho_i(x) U(|\x-\x'|) \rho_j(x')\, \dd \x \dd \x', 
\]
where $U: \R^+ \to \R$ is a kernel function to describe distance-dependent social interactions and $W: =(w_{ij})_{M \times M}$ is a symmetric matrix with \emph{positive} elements $w_{ij}$ representing the interaction intensity between individuals of species $i$ and $j$. The third term in \reff{F} is the external free energy given by
\[
\mathcal{F}_{\rm ext} = \sum_{j=1}^M \int \rho_j V_j^{\rm ext} \dd \x,
\]
where the given external potential $V_j^{\rm ext}$ is independent of $\rho$.  

By the continuity equation, one obtains a dynamic evolution equation for each species:
\[
\partial_t \rho_j = \Gamma_j \nabla \cdot (\rho_j \nabla \mu_j), 
\]
where $\Gamma_j$ is the mobility for the $j$th species and $\mu_j$ is the chemical potential given by
\[
\mu_j := \frac{\delta \mathcal{F}}{\delta \rho_j} = \beta^{-1} \log(\Lambda^3 \rho_j) + \sum_{l=1}^M w_{jl} (U \ast \rho_l)+V_j^{\rm ext}.
\]
The dynamics corresponds to the $H^{-1}$-gradient flow of the free energy $\mathcal{F}$.  In summary, the evolution of each species described by the time-dependent classical density functional theory is governed by
\begin{equation} \label{DDFT}
    \partial_t \rho_j = D_j \Delta \rho_j + \Gamma_j \nabla \cdot \left(\rho_j \sum_{l=1}^M w_{jl} \nabla(U \ast \rho_l)+\rho_j \nabla V_j^{\rm ext}\right),
\end{equation}
where  $D_j := \beta^{-1} \Gamma_j$ is the diffusion coefficient.

To describe epidemic spreading, we now combine the derived dynamics \reff{DDFT} for the description of population diffusion and interactions with the well-known SIR reaction model for disease transmission~\cite{tevrugt_NatCommun20}:
\begin{equation} \label{SIR-DDFT}
    \left \{
    \begin{split}
        \partial_t S &= D_S \Delta S + \Gamma_S \nabla \cdot  \left[ S \nabla \left( U \ast( w_{SS} S+ w_{SI} I + w_{SR} R )\right)+S \nabla V_S^{\rm ext} \right]-\lambda SI, \\
        \partial_t I &= D_I \Delta I + \Gamma_I \nabla \cdot  \left[ I \nabla \left( U \ast( w_{IS} S+ w_{II} I + w_{IR} R )\right)+I \nabla V_I^{\rm ext} \right] +\lambda SI -\gamma I, \\
        \partial_t R &= D_R \Delta R + \Gamma_R \nabla \cdot  \left[ R \nabla \left( U \ast( w_{RS} S+ w_{RI} I + w_{RR} R )\right) +R \nabla V_R^{\rm ext}\right] 
        + \gamma I,
    \end{split}
    \right.
\end{equation}
where $S(\x,t)$, $I(\x,t)$, and $R(\x,t)$ are densities of the susceptible, infectious, and recovered individuals, $\lambda$ is transmission rate, and $\gamma$ is recovery rate.
\subsection{Sum-of-exponential Approximation}\label{ss:DH}
The system~\reff{SIR-DDFT} includes diffusion, reaction, and social interactions based on the mean-field approximation involving convolution. Such an integro-differential system is numerically intractable, when stable, implicit discretization is taken into account.  Furthermore, as remarked in the work~\cite{tevrugt_NatCommun20}, linear stability analysis directly performed on such a system can only reveal the well-known outbreak criterion ascribing to reaction, whereas it cannot directly assess the impact of social interactions on instability. Following the spirit of the mean-field approximation, we propose to further approximate the social interaction kernel $U(\cdot)$ and derive a local approximation model that can be tackled numerically and analytically.
 
In the context of social interactions, it is reasonable to assume that in order to practice social distancing, the social interaction kernel $U(\cdot)$ is repulsive,  
singular at the origin, and fast decaying as the distance goes to infinity. We take the kernel $U(r)= r^{-(\alpha+1)}$ with $\alpha>0$ as an example. The following sum-of-exponential (SOE) approximation is available for $f(r) := rU(r)$.
\begin{theorem}\cite{beylkin1, beylkin2} \label{Thm SOE approx}
Let $f(r)= r^{-\alpha}$. For any $\alpha>0$ and given accuracy $\epsilon>0$, there exist $\{\kappa_l\}_{l=1}^\infty$ and positive $\{c_l\}_{l=1}^\infty$ such that
\begin{equation}\label{SOE-Error}
\left |f(r) - \sum_{l=1}^\infty c_l \frac{e^{-\kappa_l r}}{4 \pi} \right| \leq f(r) \epsilon, \mbox{ for all }  r>0.
\end{equation}
\end{theorem}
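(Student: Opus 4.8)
The plan is to realize $f(r)=r^{-\alpha}$ as a continuous superposition of decaying exponentials and then discretize that superposition with the trapezoidal rule, so that the quadrature nodes and weights become the $\kappa_l$ and $c_l$. The starting point is the classical Gamma-function identity
\[
r^{-\alpha} = \frac{1}{\Gamma(\alpha)}\int_0^\infty t^{\alpha-1} e^{-rt}\,\dd t, \qquad r>0,\ \alpha>0,
\]
which already exhibits $r^{-\alpha}$ as an integral against the exponential family $e^{-rt}$ with a strictly positive weight $t^{\alpha-1}/\Gamma(\alpha)$. Substituting $t=e^{x}$ converts this into an integral over the whole line,
\[
r^{-\alpha} = \frac{1}{\Gamma(\alpha)}\int_{-\infty}^{\infty} g_r(x)\,\dd x, \qquad g_r(x) := e^{\alpha x - r e^{x}},
\]
whose integrand is analytic and decays like $e^{\alpha x}$ as $x\to-\infty$ (since $\alpha>0$) and super-exponentially as $x\to+\infty$ (because of the $-r e^{x}$ term). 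This two-sided rapid decay is exactly the setting in which the trapezoidal rule is spectrally accurate.

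Next I would apply the trapezoidal rule with a uniform step $h>0$ on the nodes $x_n=nh$, $n\in\Z$, obtaining
\[
r^{-\alpha} \approx \frac{h}{\Gamma(\alpha)}\sum_{n\in\Z} e^{\alpha n h}\,e^{-r e^{n h}}.
\]
Setting $\kappa_n:=e^{n h}$ and $c_n:=4\pi h\,e^{\alpha n h}/\Gamma(\alpha)>0$, and relabeling the bi-infinite index set $\Z$ as $\N$ (the sum converges absolutely), this is precisely an expression of the form $\sum_l c_l\, e^{-\kappa_l r}/(4\pi)$ with \emph{positive} coefficients, as required. Note that since the statement permits infinitely many terms, no truncation step is needed; it remains only to control the quadrature error and show it can be made relatively smaller than $\epsilon$ uniformly in $r$.

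The decisive step, and the main obstacle, is the \emph{uniform-in-$r$} relative error estimate. I would use the Poisson summation formula, which expresses the trapezoidal error as a sum of Fourier-transform values of $g_r$ at the nonzero frequencies $2\pi k/h$:
\[
\frac{h}{\Gamma(\alpha)}\sum_{n\in\Z} g_r(nh) - \frac{1}{\Gamma(\alpha)}\int_{-\infty}^\infty g_r\,\dd x = \frac{1}{\Gamma(\alpha)}\sum_{k\neq 0}\widehat{g_r}\!\left(\tfrac{2\pi k}{h}\right).
\]
The key computation is that the same Gamma identity extends to complex arguments, yielding $\widehat{g_r}(\xi)=r^{-(\alpha - i\xi)}\Gamma(\alpha - i\xi)$, so that $|\widehat{g_r}(\xi)| = r^{-\alpha}\,|\Gamma(\alpha - i\xi)|$ because $|r^{i\xi}|=1$ for $r>0$. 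The factor $r^{-\alpha}=f(r)$ therefore pulls out of every term, and the relative error is bounded by
\[
\frac{1}{\Gamma(\alpha)}\sum_{k\neq 0}\left|\Gamma\!\left(\alpha - \tfrac{2\pi i k}{h}\right)\right|,
\]
which is completely independent of $r$.

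Finally, I would invoke the exponential decay of the Gamma function along vertical lines, $|\Gamma(\alpha+iy)| \sim \sqrt{2\pi}\,|y|^{\alpha-1/2} e^{-\pi|y|/2}$ as $|y|\to\infty$, so that the series above is dominated by its $k=\pm1$ terms and behaves like $e^{-\pi^2/h}$. Choosing $h$ small enough makes this bound at most $\epsilon$, which establishes \reff{SOE-Error} for all $r>0$ simultaneously. The crux of the argument is thus the cancellation $|r^{i\xi}|=1$, which converts an $r$-dependent absolute error into an $r$-independent relative one; everything else is standard quadrature analysis combined with special-function asymptotics.
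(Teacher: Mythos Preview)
The paper does not actually prove Theorem~\ref{Thm SOE approx}; it merely quotes the result from the cited references~\cite{beylkin1,beylkin2} and uses it as a black box. So there is no ``paper's own proof'' to compare against.

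That said, your argument is correct and is in fact precisely the Beylkin--Monz\'on construction underlying those references: the Gamma integral representation, the logarithmic change of variables $t=e^{x}$, trapezoidal discretization, and the Poisson-summation error analysis. The crucial observation that $\widehat{g_r}(\xi)=r^{-(\alpha-i\xi)}\Gamma(\alpha-i\xi)$, hence $|\widehat{g_r}(\xi)|=r^{-\alpha}|\Gamma(\alpha-i\xi)|$, is exactly what makes the relative error uniform in $r$, and your use of the vertical-line asymptotics $|\Gamma(\alpha+iy)|\sim\sqrt{2\pi}\,|y|^{\alpha-1/2}e^{-\pi|y|/2}$ to get the $e^{-\pi^{2}/h}$ bound is standard and correct. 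The only technical point worth mentioning explicitly is that $g_r$ lies in the Schwartz class (exponential decay at $-\infty$, double-exponential at $+\infty$), so Poisson summation is justified; you implicitly use this but do not state it.
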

\noindent According to \reff{SOE-Error}, we have
\[
\left |\frac{U(r)-\sum_{l=1}^\infty c_l e^{-\kappa_l r}/(4 \pi r)}{U(r)}  \right| \leq \epsilon, \mbox{ for all }  r>0,
\]
which suggests that the following finite sum can approximate $U(r)$ with a small relative error~\cite{beylkin2}:
\begin{equation}\label{SOE-U}
U(r) \approx U_N(r) = \sum_{l=1}^N c_l \frac{e^{-\kappa_l r}}{4 \pi r},
\end{equation}
where $N$ is the number of the finite sum. Various model reduction methods have been proposed to effectively reduce $N$ in literature~\cite{xu_JSC22, beylkin1, beylkin2}.

With the approximation~\reff{SOE-U}, we introduce a potential $\psi_S$ induced by $S$:
\[\psi_S := U_N \ast S= \sum_{l=1}^N c_l \phi_l^S,
\]
where $\phi_l^S:=\frac{e^{-\kappa_l r}}{4\pi r} \ast S $ can be shown to solve the Debye-H\"uckel equation in $\R^3$:
\begin{equation} \label{DH eq}
    -\Delta \phi_l^S +  \kappa_l^2  \phi_l^S = S.
\end{equation}
Analogously, we can introduce $\psi_I$, $\psi_R$, $\phi_l^I$, and $\phi_l^R$. 

With the Debye-H\"uckel potential, the mean-field integro-differential system \reff{SIR-DDFT} can be approximated by 
\begin{equation} \label{DH-SIR-DFT}
    \left\{
        \begin{split}
            &\partial_t S = D_S \Delta S + \Gamma_S \nabla \cdot \left[S\nabla \left( \sum_{l=1}^N c_l \left(w_{SS} \phi_l^S + w_{SI} \phi_l^I + w_{SR} \phi_l^R\right)\right)+S \nabla V_S^{\rm ext}\right] -\lambda SI, \\
            &\partial_t I = D_I \Delta I + \Gamma_I \nabla \cdot \left[I\nabla \left( \sum_{l=1}^N c_l \left(w_{IS} \phi_l^S + w_{II} \phi_l^I + w_{IR} \phi_l^R\right)\right)+I \nabla V_I^{\rm ext}\right] +\lambda SI -\gamma I,  \\
            &\partial_t R = D_R \Delta R + \Gamma_R \nabla \cdot \left[R\nabla \left( \sum_{l=1}^N c_l \left(w_{RS} \phi_l^S + w_{RI} \phi_l^I + w_{RR} \phi_l^R\right)\right)+R \nabla V_R^{\rm ext}\right] +\gamma I, \\
            & -\Delta \phi_l^S +  \kappa_l^2 \phi_l^S = S,\enspace l = 1,\cdots,N, \\
            & -\Delta \phi_l^I + \kappa_l^2 \phi_l^I = I,\enspace l = 1,\cdots,N, \\
            & -\Delta \phi_l^R + \kappa_l^2 \phi_l^R = R,\enspace l = 1,\cdots,N, \\
        \end{split}
    \right.
\end{equation}
where $w_{ij}$ for $i,j =S, I, R$ is the positive interaction intensity, and $\{\kappa_l\}_{l=1}^N$ and $\{c_l\}_{l=1}^N$ are treated as parameters. 
To further study the system~\reff{DH-SIR-DFT} analytically and numerically, we consider the system on a bounded domain $\Omega$ and impose the following initial conditions and homogeneous Neumann boundary conditions
\begin{equation}\label{Init+BC}
    \left\{
        \begin{split}
            &S(\bm{x},0) = S_0(\bm{x}), I(\bm{x},0) = I_0(\bm{x}), R(\bm{x},0) = R_0(\bm{x}), \\
            &\nabla S \cdot \bm{n} = \nabla I \cdot \bm{n} = \nabla R \cdot \bm{n} = \nabla \phi^S_l \cdot \bm{n} = \nabla \phi^I_l \cdot \bm{n} = \nabla \phi^R_l \cdot \bm{n} = 0 \enspace \text{on} \enspace \partial \Omega, \\
        \end{split}
    \right.
\end{equation}
where $\bm{n}$ is a unit exterior normal vector on the boundary $\partial \Omega$, and $S_0(\bm{x})$, $I_0(\bm{x})$, and $R_0(\bm{x})$ are initial data. From now on, we abbreviate the model \reff{DH-SIR-DFT}, which couples the density functional theory with the SIR reaction using the Debye-H\"uckel potential, as {\bf DFT-SIR-DH}.

\section{Linear stability analysis} \label{Sec. Linear stability analysis}
We now perform linear stability analysis on the system \reff{DH-SIR-DFT} to assess the impact of both reaction and social interactions on the development of solution instability. For simplicity, the external potentials are set as zero. We consider a homogeneous steady state of \reff{DH-SIR-DFT} that is given by
\begin{equation} \label{SIR DH model s.s. eq of N approx}
    \left\{
        \begin{split}
            &\lambda S_{\rm hom}^* I_{\rm hom}^* = 0, &\\
            &I_{\rm hom}^* = 0, &\\
            &\kappa_l^2 \phi_{l,\rm hom}^{S,*} = S_{\rm hom}^*,& \enspace l = 1,\cdots,N, \\
            &\kappa_l^2 \phi_{l,\rm hom}^{I,*} = I_{\rm hom}^*,& \enspace l = 1,\cdots,N, \\
            &\kappa_l^2 \phi_{l,\rm hom}^{R,*} = R_{\rm hom}^*,& \enspace l = 1,\cdots,N. 
        \end{split}
    \right.
\end{equation}
Linear stability analysis on the steady state \reff{SIR DH model s.s. eq of N approx} reveals the following conditions on instability due to reaction and social interactions.
\begin{theorem}[Linear stability analysis] \label{t:LBA}
The steady state solution \reff{SIR DH model s.s. eq of N approx} to the DFT-SIR-DH model is linearly unstable if one of the following conditions holds 
\begin{subequations}
    \begin{align}
        \label{SIR DH lin.stb. condition Reaction}    &\lambda S_{\rm hom}^* > \gamma, \\
        \label{SIR DH lin.stb. condition Energy QB}   &D_S + D_R + \left( \Gamma_S S_{\rm hom}^* w_{SS} + \Gamma_R R_{\rm hom}^* w_{RR} \right) \sum\limits_{l=1}^N \frac{c_l}{\kappa_l^2} < 0, \\
        \label{SIR DH lin.stb. condition Energy QC}   &w_{SR}w_{RS} > \left(w_{SS}+\frac{1}{\beta S_{\rm hom}^* {\sum}_{l=1}^N c_l/\kappa_l^2}\right)\left(w_{RR}+\frac{1}{\beta R_{\rm hom}^*  {\sum}_{l=1}^N c_l/\kappa_l^2 }\right).  
    \end{align}
\end{subequations}
\end{theorem}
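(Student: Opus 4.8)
\section*{Proof proposal}

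The plan is to carry out a normal-mode linearization around the disease-free steady state \reff{SIR DH model s.s. eq of N approx}, exploiting that it has $I_{\rm hom}^*=0$ and spatially constant potentials, so every background gradient vanishes. First I would set $S=S_{\rm hom}^*+\delta S$, $I=\delta I$, $R=R_{\rm hom}^*+\delta R$ and $\phi_l^X=\phi_{l,\rm hom}^{X,*}+\delta\phi_l^X$, and keep only first-order terms. Two simplifications are immediate. In each interaction drift $\nabla\cdot(X\nabla\Psi_X)$ the constant background potential contributes nothing because $\nabla\Psi_{X,\rm hom}=0$, so the linearization is $X_{\rm hom}^*\,\Delta\,\delta\Psi_X$; and in the $I$-equation this drift is quadratic in the perturbation (since $I_{\rm hom}^*=0$) and therefore drops out at linear order, while $\lambda SI$ linearizes to $\lambda S_{\rm hom}^*\,\delta I$.

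Next I would expand every perturbation in the Neumann eigenfunctions $e_{\bk}$ of $-\Delta$ on $\Omega$ with eigenvalue $k^2\ge 0$, writing $\delta X=\hat X\,e^{\sigma t}e_{\bk}(\x)$ and likewise for the potentials, so that $\Delta\mapsto -k^2$. The Debye-H\"uckel relations then solve algebraically as $\hat\phi_l^X=\hat X/(k^2+\kappa_l^2)$, and substituting collapses the potential sums into the single factor $A(k):=\sum_{l=1}^N c_l/(k^2+\kappa_l^2)$, which is positive, strictly decreasing in $k$, and maximal at $A(0)=\sum_l c_l/\kappa_l^2$. This reduces the linearized problem to a $3\times3$ eigenvalue problem $\sigma\,(\hat S,\hat I,\hat R)^\top=J(k)\,(\hat S,\hat I,\hat R)^\top$.

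The key structural observation is that the $\hat I$-equation depends only on $\hat I$ itself, so in the ordering $(I,S,R)$ the Jacobian $J(k)$ is block-triangular: one eigenvalue is $\sigma_I=-D_I k^2+\lambda S_{\rm hom}^*-\gamma$, and the other two are the eigenvalues of the $2\times2$ block $M(k)$ acting on $(\hat S,\hat R)$. Maximizing $\sigma_I$ over $k$ (attained at $k=0$) gives $\sigma_I>0 \Leftrightarrow \lambda S_{\rm hom}^*>\gamma$, which is \reff{SIR DH lin.stb. condition Reaction}. For the block I would use that a $2\times2$ matrix has an eigenvalue with positive real part iff $\mathrm{tr}\,M(k)>0$ or $\det M(k)<0$. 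Using $D_j=\beta^{-1}\Gamma_j$ to factor $D_X k^2+\Gamma_X X_{\rm hom}^* k^2 A\,w_{XX}=\Gamma_X X_{\rm hom}^* k^2 A\big(w_{XX}+(\beta X_{\rm hom}^* A)^{-1}\big)$, the trace condition becomes $D_S+D_R+A(k)\big(\Gamma_S S_{\rm hom}^* w_{SS}+\Gamma_R R_{\rm hom}^* w_{RR}\big)<0$, and the determinant condition becomes $w_{SR}w_{RS}>\big(w_{SS}+(\beta S_{\rm hom}^* A(k))^{-1}\big)\big(w_{RR}+(\beta R_{\rm hom}^* A(k))^{-1}\big)$.

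Finally I would identify the most dangerous mode. Because $A(k)$ is decreasing, both right-hand expressions are monotone in $k$, so each inequality is easiest to satisfy in the long-wavelength limit, where $A(k)\to A(0)=\sum_l c_l/\kappa_l^2$; evaluating there yields exactly \reff{SIR DH lin.stb. condition Energy QB} and \reff{SIR DH lin.stb. condition Energy QC}. The point requiring care is that $M(0)=0$, so the $(S,R)$ instability is genuinely a small-$k$ rather than a $k=0$ phenomenon: the argument is that if \reff{SIR DH lin.stb. condition Energy QC} holds strictly then $\det M(k)<0$ for all sufficiently small $k>0$ by continuity, producing a real positive eigenvalue, and similarly \reff{SIR DH lin.stb. condition Energy QB} forces $\mathrm{tr}\,M(k)>0$ near $k=0$. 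I expect this continuity-and-monotonicity bookkeeping, together with the algebraic factoring that produces the $(\beta X_{\rm hom}^*\sum_l c_l/\kappa_l^2)^{-1}$ terms in the thresholds, to be the main obstacle, rather than any single computation.
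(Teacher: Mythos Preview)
Your proposal is correct and follows essentially the same route as the paper: linearize about the disease-free state, expand in Neumann eigenfunctions, eliminate the potentials via $A(k)=\sum_l c_l/(k^2+\kappa_l^2)$, observe that the $\hat I$-equation decouples to give \reff{SIR DH lin.stb. condition Reaction}, and analyze the remaining $2\times 2$ $(S,R)$ block through its trace and determinant---the paper does the equivalent thing by factoring the $3\times 3$ characteristic polynomial as $l(\sigma)q(\sigma)$ and requiring $Q_B(k)<0$ or $Q_C(k)<0$, then passing to $k\to 0^+$ by continuity exactly as you do. One small overstatement worth flagging: your assertion that $A(k)$ is positive and strictly decreasing presumes all $c_l>0$, whereas condition \reff{SIR DH lin.stb. condition Energy QB} is only attainable when some $c_l<0$ (as the paper itself remarks); since your actual argument rests on continuity at $k=0$ rather than on monotonicity, this does not affect the validity of the proof.
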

\begin{proof}
Consider perturbation to the base state:
\begin{equation} \nonumber
\begin{split}
    (S,I,R,\phi_l^S,\phi_l^I,\phi_l^R)^T =& (S_{\rm hom}^*,I_{\rm hom}^*,R_{\rm hom}^*,\phi_{l,hom}^{S,*},\phi_{l,hom}^{I,*},\phi_{l,hom}^{R,*})^T \\
    &+ (\delta S,\delta I,\delta R,\delta \phi_l^S,\delta \phi_l^I,\delta \phi_l^R)^T, \enspace l = 1,\cdots,N, 
\end{split}
\end{equation}
where $\delta S$, $\delta I$, $\delta R$, $\delta \phi_l^S$, $\delta \phi_l^I$, and $\delta \phi_l^R$ are infinitesimal perturbation. Substituting the perturbed solution into the DFT-SIR-DH model \reff{DH-SIR-DFT} and ignoring higher order terms, one obtains
\begin{equation} \label{SIR DH model lin.stb. eq}
    \left\{
        \begin{split}
           & \partial_t \delta S = D_S \Delta \delta S + \Gamma_S S_{\rm hom}^* \Delta \delta \Phi_S -\lambda S_{\rm hom}^* \delta I, \\
           &\partial_t \delta I = D_I \Delta \delta I + (\lambda S_{\rm hom}^* - \gamma) \delta I, \\
           &\partial_t \delta R = D_R \Delta \delta R + \Gamma_R R_{\rm hom}^* \Delta \delta \Phi_R + \gamma \delta I,\\
           &-\Delta \delta \phi_l^S + \kappa_l^2 \delta \phi_l^S = \delta S, \enspace l = 1,\cdots,N,  \\
           &-\Delta \delta \phi_l^I + \kappa_l^2 \delta \phi_l^I = \delta I, \enspace l = 1,\cdots,N,  \\
           &-\Delta \delta \phi_l^R + \kappa_l^2 \delta \phi_l^R = \delta R, \enspace l = 1,\cdots,N, 
        \end{split}
    \right.
\end{equation}
where
\begin{equation}\nonumber
    \begin{split}
        \delta \Phi_S &:= \sum\limits_{l=1}^N c_l (w_{SS}\delta \phi_l^S + w_{SI}\delta \phi_l^I + w_{SR}\delta \phi_l^R) ,  \\
        \delta \Phi_R &:= \sum\limits_{l=1}^N c_l (w_{RS}\delta \phi_l^S + w_{RI}\delta \phi_l^I + w_{RR}\delta \phi_l^R).
    \end{split}
\end{equation}
Consider
\[
(\delta S,\delta I,\delta R,\delta \phi_l^S,\delta \phi_l^I,\delta \phi_l^R)^T = e^{\sigma t}\eta_k(\bm{x}) (\Tilde{S},\Tilde{I},\Tilde{R},\Tilde{\phi_l^S},\Tilde{\phi_l^I},\Tilde{\phi_l^R})^T, \enspace l = 1,\cdots,N,
\]
where $\{\eta_k\}_{k \ge 0}$ are eigenfunctions of the Laplacian given by
\begin{equation}\nonumber
    \left\{
        \begin{split}
            &-\Delta \eta_k = k^2 \eta_k \enspace \text{in} \enspace \Omega,  \\
            &\nabla \eta_k \cdot  \bm{n} = 0 \enspace \text{on} \enspace \partial \Omega, 
        \end{split}
    \right.
\end{equation}
with the $k$ being the magnitude of the Fourier-space vector. By direct calculations, 
\begin{equation} \label{SIR DH model lin.stb. fourier eq 6 var.}
    \left\{
        \begin{split}
            &\sigma \Tilde{S} = -k^2 (D_S \Tilde{S} + \Gamma_S S_{\rm hom}^* \Tilde{\Phi}_S) -\lambda S_{\rm hom}^* \Tilde{I}, \\
            &\sigma \Tilde{I} = -k^2 D_I \Tilde{I} +(\lambda S_{\rm hom}^* - \gamma)  \Tilde{I}, \\
            &\sigma \Tilde{R} = -k^2 (D_R \Tilde{R} + \Gamma_R R_{\rm hom}^* \Tilde{\Phi}_R) +\gamma \Tilde{I}, \\
            &(k^2 + \kappa_l^2) \Tilde{\phi_l^S} = \Tilde{S}, \enspace l = 1,\cdots,N, \\
            &(k^2 + \kappa_l^2) \Tilde{\phi_l^I} = \Tilde{I}, \enspace l = 1,\cdots,N, \\
            &(k^2 + \kappa_l^2) \Tilde{\phi_l^R} = \Tilde{R}, \enspace l = 1,\cdots,N,
        \end{split}
    \right.
\end{equation}
where
\begin{equation}\nonumber
    \begin{split}
        \Tilde{\Phi}_S &:= \sum\limits_{l=1}^N c_l (w_{SS} \Tilde{\phi_l^S}  + w_{SI}\Tilde{\phi_l^I} + w_{SR}\Tilde{\phi_l^R}) ,  \\
        \Tilde{\Phi}_R &:= \sum\limits_{l=1}^N c_l (w_{RS}\Tilde{\phi_l^S} + w_{RI}\Tilde{\phi_l^I} + w_{RR}\Tilde{\phi_l^R}).
    \end{split}
\end{equation}
Eliminating $\Tilde{\phi_l^S},\Tilde{\phi_l^I},\Tilde{\phi_l^R}$ in (\ref{SIR DH model lin.stb. fourier eq 6 var.}), one arrives at
\[
    \left\{
        \begin{split}
            &\left[\sigma + k^2 (D_S + \Gamma_S S_{\rm hom}^* w_{SS} Y_k)\right]\Tilde{S} + \left(\lambda S_{\rm hom}^* + k^2 \Gamma_S S_{\rm hom}^* w_{SI} Y_k\right)\Tilde{I} + k^2\Gamma_S S_{\rm hom}^* w_{SR} Y_k\Tilde{R} = 0, \\
            &\left[\sigma + k^2 D_I - (\lambda S_{\rm hom}^* - \gamma)\right]\Tilde{I} = 0, \\
            &k^2\Gamma_R R_{\rm hom}^* w_{RS} Y_k\Tilde{S} + (-\gamma + k^2 \Gamma_R R_{\rm hom}^* w_{RI} Y_k)\Tilde{I} + \left[\sigma + k^2 (D_R + \Gamma_R R_{\rm hom}^* w_{RR} Y_k)\right]\Tilde{R}= 0,
        \end{split}
    \right.
\]
where
\[ Y_k := \sum\limits_{l=1}^N \frac{c_l}{k^2 + \kappa_l^2}.  \]
In order to get a nonzero solution ($\Tilde{S}$, $\Tilde{I}$, $\Tilde{R}$), one can set the determinant of the coefficient matrix as zero, \emph{i.e.},
\begin{equation} \nonumber
    \left |
    \begin{matrix}
    \sigma + k^2 (D_S + \Gamma_S S_{\rm hom}^* w_{SS} Y_k) & \lambda S_{\rm hom}^* + k^2 \Gamma_S S_{\rm hom}^* w_{SI} Y_k & k^2\Gamma_S S_{\rm hom}^* w_{SR} Y_k \\
    0 & \sigma + k^2 D_I - (\lambda S_{\rm hom}^* - \gamma) & 0 \\
    k^2\Gamma_R R_{\rm hom}^* w_{RS} Y_k & -\gamma + k^2 \Gamma_R R_{\rm hom}^* w_{RI} Y_k & \sigma + k^2 (D_R + \Gamma_R R_{\rm hom}^* w_{RR} Y_k) \\
    \end{matrix}
    \right |
    = 0.
\end{equation}
This cubic equation for $\sigma$ could be rewritten as
\[
l(\sigma) q(\sigma) = 0, 
\]
where
\[
        l(\sigma) = \sigma + k^2 D_I  - \lambda S_{\rm hom}^* + \gamma ~\text{ and }~
        q(\sigma) = \sigma^2 + Q_B(k)\sigma +Q_C(k),
\]
with 
\[
    \begin{split}
        &Q_B(k) = k^2 \left[ D_S+D_R+ \left(\Gamma_S S_{\rm hom}^* w_{SS} +  \Gamma_R R_{\rm hom}^* w_{RR}   \right)Y_k \right], \\
        &Q_C(k) = k^4 [ D_S D_R + \left( D_S  \Gamma_R  R_{\rm hom}^* w_{RR} + D_R  \Gamma_S S_{\rm hom}^* w_{SS}  \right)Y_k \\
         & \quad \qquad \qquad+ \Gamma_S \Gamma_R S_{\rm hom}^* R_{\rm hom}^* \left(  w_{SS}w_{RR}-w_{SR}w_{RS}  \right)Y_k^2 ].
    \end{split}
\]
Denote the roots of $l(\sigma)$ and $q(\sigma)$ as $\sigma_1(k) = -D_I k^2 + \lambda S_{\rm hom}^* - \gamma$ and  $\sigma_{2,\pm}(k)$, respectively. Therefore, the steady state solution \reff{SIR DH model s.s. eq of N approx} to the DFT-SIR-DH model is linearly unstable if $\sigma_1(k)$ or $\sigma_{2,\pm}(k)$ have positive real part for some $k$. If $\sigma_1(0) > 0$, then ${\rm Re}(\sigma_1(k)) > 0$ for some $k$, by the continuity of $\sigma_1(\cdot)$. This leads to the first condition \reff{SIR DH lin.stb. condition Reaction}. On the other hand, 
${\rm Re}(\sigma_{2,\pm}(k)) > 0$ for some $k$ if $Q_B(k)<0$ or $Q_C(k)<0$ for some $k$. Let 
\[
Q_b(k):= D_S+D_R+ \left(\Gamma_S S_{\rm hom}^* w_{SS} +  \Gamma_R R_{\rm hom}^* w_{RR}   \right)Y_k.
\]
By the continuity of $Q_b(k)$ at $k=0$, one can obtain that if $Q_b(0)<0$, \emph{i.e.,} (\ref{SIR DH lin.stb. condition Energy QB}), then $Q_B(k)=k^2 Q_b(k)<0$ for some $k$. 
Let
\begin{equation}\nonumber
\begin{split}
    Q_c(k) &:= D_S D_R + \left( D_S  \Gamma_R  R_{\rm hom}^* w_{RR} + D_R  \Gamma_S S_{\rm hom}^* w_{SS}  \right)Y_k \\
         &\qquad + \Gamma_S \Gamma_R S_{\rm hom}^* R_{\rm hom}^* \left(  w_{SS}w_{RR}-w_{SR}w_{RS}  \right)Y_k^2.
\end{split}
\end{equation}
Similarly, by the continuity of $Q_c(k)$ at $k=0$, one can obtain that if $Q_c(0)<0$, \emph{i.e.,} (\ref{SIR DH lin.stb. condition Energy QC}), then $Q_C(k)=k^4 Q_c(k)<0$ for some $k$. This completes the proof.
\end{proof}

\begin{remark}
The first condition \reff{SIR DH lin.stb. condition Reaction} indicates that a disease outbreaks when $\lambda S_{\rm hom}^*/\gamma > 1$, which is related to the basic reproduction number defined by, e.g., $\mathcal{R}_0 := \lambda S_{\rm hom}^*/\gamma$~\cite{ReproductionNumber1, ReproductionNumber2}. The second condition \reff{SIR DH lin.stb. condition Energy QB} only possibly holds when some of the coefficients $c_l$ are negative, corresponding to a decomposition with some attractive Yukawa potentials in \reff{SOE-U}. For repulsive potentials ($c_l>0$ for $l=1, \cdots, N$) that are used to model the practice of social distancing in epidemic dynamics, the second condition \reff{SIR DH lin.stb. condition Energy QB} would not hold. It is of interest to note from the third condition \reff{SIR DH lin.stb. condition Energy QC}, which implies $w_{SR}w_{RS} > w_{SS} w_{RR}$ for positive $c_l$, that the epidemic would grow when interaction intensity between $S$ and $R$ is relatively stronger than intra-action intensity. 
\end{remark}

\begin{remark}
The instability conditions derived in the Theorem \ref{t:LBA} can be categorized according to the work~\cite{aslyamov_PRL23}. The  first condition \reff{SIR DH lin.stb. condition Reaction} only involves reaction terms, while the second condition \reff{SIR DH lin.stb. condition Energy QB} and third condition \reff{SIR DH lin.stb. condition Energy QC} are related to the DFT free energy \reff{F}. Therefore, as categorized in~\cite{aslyamov_PRL23},  \reff{SIR DH lin.stb. condition Reaction} can be classified as the R-type, and \reff{SIR DH lin.stb. condition Energy QB} and \reff{SIR DH lin.stb. condition Energy QC} can be regarded as the E-type. 
\end{remark}

\section{Global existence} \label{Sec. PDE analysis}
To further understand the newly derived DFT-SIR-DH model~\reff{DH-SIR-DFT}, we study the global existence of the solution to the model with a simplified self-repulsive potential.   More specifically, we assume that $N=1$, $c_1=1$, $V_S^{\rm ext}=V_I^{\rm ext}=V_R^{\rm ext}=0$, and $W$ is a diagonal matrix with positive elements.  Then, we shall analyze the following initial-boundary value problem
\begin{equation} \label{IBVP}
    \left\{
        \begin{split}
            &\partial_t S = D_S \Delta S + \Gamma_S w_{SS} \nabla \cdot (S\nabla \psi_S ) -\lambda SI, \\
            &\partial_t I = D_I \Delta I + \Gamma_I w_{II} \nabla \cdot (I\nabla \psi_I  ) +\lambda SI -\gamma I,  \\
            &\partial_t R = D_R \Delta R + \Gamma_R  w_{RR} \nabla \cdot (R\nabla \psi_R) +\gamma I, \\
            & -\Delta \psi_S + \kappa^2 \psi_S =  S,  \\
            & -\Delta \psi_I + \kappa^2 \psi_I =  I,  \\
            & -\Delta \psi_R + \kappa^2 \psi_R =  R,  \\
            & S(\bm{x},0) = S_0(\bm{x}), I(\bm{x},0) = I_0(\bm{x}), R(\bm{x},0) = R_0(\bm{x}), \\
            &\nabla S \cdot \Vec{\bm{n}} = \nabla I \cdot \Vec{\bm{n}} = \nabla R \cdot \Vec{\bm{n}} = \nabla \psi_S \cdot \Vec{\bm{n}} = \nabla \psi_I \cdot \Vec{\bm{n}} = \nabla \psi_R \cdot \Vec{\bm{n}} = 0 \enspace \text{on} \enspace \partial \Omega, \\
        \end{split}
    \right.
\end{equation}
where $\Omega \subset \R^n$ is an open bounded domain with smooth boundary $\partial \Omega$.

We first introduce the following basic result on local existence with a relevant extensibility
criterion, which can be similarly proved by an application of the standard parabolic theory to the chemotaxis-type systems. The proof is omitted here; See \emph{e.g.},~\cite{amann_MathZ89, winkler_JDE05, TaoWinkler_NARWA23}. 

\begin{lemma}[Local existence]\label{lemmaLocal}
Assume that $S_0, I_0, R_0 \in W^{1,\infty}(\Omega)$ are non-negative. Then there exist $T_{\max} \in (0,\infty]$ and unique non-negative functions $S,I,R,\psi_S,\psi_I,\psi_R \in C^0(\Bar{\Omega} \times [0,T_{\max})) \cap C^{2,1}(\Bar{\Omega} \times (0,T_{\max}))$ solving \reff{IBVP}  classically in $\Omega \times (0,T_{\max})$. In addition, if $T_{\max} < \infty$, then
\[
\|S(\cdot,t)\|_{L^{\infty}(\Omega)}, \|I(\cdot,t)\|_{L^{\infty}(\Omega)}, \|R(\cdot,t)\|_{L^{\infty}(\Omega)} \rightarrow \infty,
\]
as $t \rightarrow T_{\max}$. 
\end{lemma}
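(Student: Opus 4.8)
The plan is to recast \reff{IBVP} as an abstract semilinear Cauchy problem for the density triple and to close it by a contraction-mapping argument in the spirit of the parabolic theory for chemotaxis-type systems~\cite{amann_MathZ89, winkler_JDE05, TaoWinkler_NARWA23}. First I would eliminate the potentials. Because the Helmholtz-type operator $-\Delta + \kappa^2$ under homogeneous Neumann conditions is boundedly invertible on every $L^p(\Omega)$ with $p \in (1,\infty)$ --- the zeroth-order term $\kappa^2 > 0$ removes the mean-value obstruction of the pure Neumann Laplacian --- I set $\psi_S = (-\Delta + \kappa^2)^{-1} S$ and likewise for $\psi_I, \psi_R$, and record the elliptic estimate $\|\psi\|_{W^{2,p}(\Omega)} \le C\|u\|_{L^p(\Omega)}$. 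Substituting these relations turns \reff{IBVP} into a closed system $\partial_t u = \mathcal{A} u + \mathcal{N}(u)$ for $u = (S,I,R)$, where $\mathcal{A} = \mathrm{diag}(D_S\Delta, D_I\Delta, D_R\Delta)$ with Neumann conditions generates an analytic semigroup $(e^{t\mathcal{A}})_{t\ge 0}$, and $\mathcal{N}$ collects the divergence-form drift terms $\nabla\cdot(u\,\nabla\psi)$ together with the SIR reaction.

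Next I would fix $p > n$, so that $W^{1,p}(\Omega) \hookrightarrow L^\infty(\Omega)$ and $W^{2,p}(\Omega)\hookrightarrow C^1(\bar\Omega)$, and study the mild formulation $u(t) = e^{t\mathcal{A}} u_0 + \int_0^t e^{(t-s)\mathcal{A}}\mathcal{N}(u(s))\,ds$ in a small closed ball of $C([0,T]; W^{1,p}(\Omega)^3)$. The key structural observation is that $\mathcal{N}$ maps $W^{1,p}$ into $L^p$: writing $\nabla\cdot(u\,\nabla\psi) = \nabla u\cdot\nabla\psi + u\,\Delta\psi$, the elliptic gain places $\nabla\psi \in C^1$ and $\Delta\psi = \kappa^2\psi - u \in L^\infty$, so the drift lies in $L^p$ and is quadratic, hence locally Lipschitz; the reaction terms are manifestly so. Combined with the smoothing bound $\|e^{\tau\mathcal{A}} g\|_{W^{1,p}} \le C\tau^{-1/2}\|g\|_{L^p}$, whose time singularity $\tau^{-1/2}$ is integrable, this makes the Duhamel map a contraction on $[0,T]$ for $T$ depending only on $\|u_0\|_{W^{1,\infty}}$, yielding a unique local mild solution. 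A standard parabolic bootstrap (interior $L^p$ maximal regularity followed by Schauder estimates) then upgrades this solution to the asserted class $C^0(\bar\Omega\times[0,T_{\max})) \cap C^{2,1}(\bar\Omega\times(0,T_{\max}))$, and the elliptic equations transfer the regularity to $\psi_S, \psi_I, \psi_R$.

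Non-negativity I would obtain from the quasi-positive structure of the reaction on the boundary of the positive cone: $-\lambda SI$ vanishes when $S = 0$, $\lambda SI - \gamma I$ vanishes when $I = 0$, and $\gamma I \ge 0$ when $R = 0$ since $I \ge 0$, so a componentwise weak maximum principle preserves $S, I, R \ge 0$. For the extensibility criterion I would argue by contradiction: since the local existence time depends only on the $W^{1,\infty}$ norm of the data at the initial instant, it suffices to show that a uniform $L^\infty$ bound on $S, I, R$ up to $T_{\max}$ forces a uniform $W^{1,\infty}$ bound; this follows by bootstrapping --- elliptic regularity controls $\nabla\psi$ in terms of the densities, after which the semigroup representation controls $\nabla S, \nabla I, \nabla R$ uniformly --- so the solution could be continued beyond $T_{\max}$, contradicting its maximality.

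The step I expect to be the main obstacle is the treatment of the taxis-type drift $\nabla\cdot(u\,\nabla\psi)$, which is of the same differential order as the diffusion and therefore cannot be treated as a lower-order perturbation in a naive way. The resolution hinges on the two-derivative gain of the Helmholtz operator, which makes $\nabla\psi$ exactly one order smoother than $u$ and lets the divergence be absorbed by the half-order smoothing of the analytic semigroup, producing the integrable singularity that drives the contraction. A secondary subtlety is that the blow-up alternative is phrased only in terms of the $L^\infty$ norms of the densities, so the bootstrap that trades the elliptic regularity gain for parabolic smoothing is essential to close the extensibility argument at that weak level of control.
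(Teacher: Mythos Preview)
Your proposal is correct and aligns precisely with what the paper does: the paper omits the proof entirely, stating only that it ``can be similarly proved by an application of the standard parabolic theory to the chemotaxis-type systems'' and citing the same references~\cite{amann_MathZ89, winkler_JDE05, TaoWinkler_NARWA23}. Your sketch fleshes out exactly this standard argument --- inverting the Helmholtz operator to close the system in the densities, running a fixed-point argument for the mild formulation in $W^{1,p}$ with $p>n$, bootstrapping to classical regularity, and reading off non-negativity and the blow-up alternative --- so there is nothing to contrast.
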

For simplicity, the integral $\int_{\Omega} \rho(\bm{x},t) {\rm d}\x$ is written as $\int_{\Omega} \rho$. The following important properties can be readily derived. 
\begin{lemma}[Mass conservation and positivity]\label{Mass+}
The solution $(S,I,R)$ of \reff{IBVP} satisfies the following property
\begin{equation} \label{SIR DH model mass conserv. eq}
    \int_{\Omega} S+I+R = \int_{\Omega} S_0+I_0+R_0 \enspace \text{for all} \enspace t \in (0,T_{\max}).
\end{equation}
Moreover, if $S_0,I_0,R_0 \ge 0 \enspace \text{for all} \enspace \bm{x} \in \Omega$, then the corresponding solution $(S,I,R,\psi_S,\psi_I,\psi_R )$ of \reff{IBVP} satisfies
\begin{equation} \label{SIR DH model posi. pre. for SIR}
\begin{aligned}
        &S,I,R,\psi_S, \psi_I, \psi_R \ge 0 \enspace \text{for all} \enspace \bm{x} \in \Omega, t \in (0,T_{\max}), \\
        &\int_{\Omega} S, \int_{\Omega} I, \int_{\Omega} R \le M \enspace \text{for all} \enspace t \in (0,T_{\max}), \\
        &\int_{\Omega} \psi_S, \int_{\Omega} \psi_I, \int_{\Omega} \psi_R \le 
\frac{M}{\kappa^2} \enspace \text{for all} \enspace t \in (0,T_{\max}),
\end{aligned}
\end{equation}
where $M:=\int_{\Omega} S_0+I_0+R_0 >0$ is the total mass.
\end{lemma}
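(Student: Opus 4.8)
The plan is to establish the four assertions in turn, each reducing to integration of the governing equations against a suitable test function combined with the homogeneous Neumann conditions in \reff{IBVP}. First, for mass conservation, I would integrate each of the first three equations of \reff{IBVP} over $\Omega$. The diffusion term contributes $D_\rho\int_\Omega \Delta\rho = D_\rho\int_{\partial\Omega}\nabla\rho\cdot\bm{n} = 0$, and the advection term contributes $\Gamma_\rho w_{\rho\rho}\int_\Omega \nabla\cdot(\rho\nabla\psi_\rho) = \Gamma_\rho w_{\rho\rho}\int_{\partial\Omega}\rho\,\nabla\psi_\rho\cdot\bm{n} = 0$, both by the divergence theorem together with the Neumann conditions on $\rho$ and on $\psi_\rho$. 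This leaves only the reaction contributions, giving $\frac{d}{dt}\int_\Omega S = -\lambda\int_\Omega SI$, $\frac{d}{dt}\int_\Omega I = \lambda\int_\Omega SI - \gamma\int_\Omega I$, and $\frac{d}{dt}\int_\Omega R = \gamma\int_\Omega I$. Adding the three identities, the reaction terms cancel exactly, so $\frac{d}{dt}\int_\Omega(S+I+R) = 0$, which yields \reff{SIR DH model mass conserv. eq}.

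Second, non-negativity of $S,I,R$ is already furnished by Lemma \ref{lemmaLocal}; alternatively it follows from the quasi-positive structure of the reaction terms (each reaction term vanishes on the set where its own component vanishes, and the $+\gamma I$ source in the $R$-equation is non-negative), so a standard maximum-principle/comparison argument preserves non-negativity of each component for non-negative initial data. For non-negativity of the potentials I would exploit the Debye-H\"uckel equations directly. Testing $-\Delta\psi_S + \kappa^2\psi_S = S$ against the negative part $\psi_{S,-} := \min(\psi_S,0)$ and integrating by parts, with the boundary term dropping by the Neumann condition, yields $\int_\Omega|\nabla\psi_{S,-}|^2 + \kappa^2\int_\Omega\psi_{S,-}^2 = \int_\Omega S\,\psi_{S,-} \le 0$, since $S\ge 0$ and $\psi_{S,-}\le 0$. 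As the left side is non-negative, it must vanish, forcing $\psi_{S,-}\equiv 0$, i.e. $\psi_S\ge 0$; the identical argument applies to $\psi_I$ and $\psi_R$.

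Third, the integral bounds are immediate consequences of the first two parts. Since $S,I,R \ge 0$ and $\int_\Omega(S+I+R) = M$ by mass conservation, each of $\int_\Omega S$, $\int_\Omega I$, $\int_\Omega R$ is bounded above by $M$. Finally, integrating $-\Delta\psi_S + \kappa^2\psi_S = S$ over $\Omega$ and using $\int_\Omega(-\Delta\psi_S) = -\int_{\partial\Omega}\nabla\psi_S\cdot\bm{n} = 0$ gives $\kappa^2\int_\Omega\psi_S = \int_\Omega S \le M$, hence $\int_\Omega\psi_S \le M/\kappa^2$, and the same for $\psi_I$, $\psi_R$, completing \reff{SIR DH model posi. pre. for SIR}.

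I do not anticipate a serious obstacle, since every step reduces to integration by parts under the Neumann conditions. The only two points requiring a little care are the potential non-negativity, where one must invoke the maximum principle for the Debye-H\"uckel operator (equivalently the negative-part testing above) rather than a direct sign inspection, and the observation that mass conservation hinges on the \emph{exact} cancellation of the non-divergence reaction terms upon summation.
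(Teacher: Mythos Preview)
Your argument is correct and is the standard route; the paper itself does not supply a proof of this lemma, merely remarking that ``the following important properties can be readily derived'' before stating it. Each of your steps (integration against $1$ for mass conservation, invoking Lemma~\ref{lemmaLocal} for non-negativity of $S,I,R$, Stampacchia-type testing against the negative part for $\psi_\rho\ge 0$, and integrating the Debye--H\"uckel equations for the $L^1$ bounds on $\psi_\rho$) is valid and matches what one would expect the authors to have in mind.
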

With such $L^1$ estimates, the proof in the work~\cite[Lemma 4.1]{tao_Math_M3AS13} can be used with slight modification to 
establish the following estimate on $\psi_S, \psi_I$, and $\psi_R$. The proof is thus omitted for brevity. 
\begin{lemma} \label{Lem. psi controlled by rho}
Let $\Omega$ be an open bounded domain in $\R^n$  with smooth boundary $\partial \Omega$. Suppose $S_0, I_0, R_0$ are non-negative functions. Then for any $p > \max(\frac{n}{2},1)$ and $\varepsilon > 0$, there exists some constant $c(\varepsilon)>0$ such that the solution $(S,I,R,\psi_S,\psi_I,\psi_R)$ of \reff{IBVP} satisfies
\[
\int_{\Omega} \psi_{\rho}^{p+1} \le \varepsilon \int_{\Omega} \rho^{p+1} + c(\varepsilon) \enspace \text{for} ~\rho = S,I,R.
\]
\end{lemma}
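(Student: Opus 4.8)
The plan is to treat each of the three stationary Debye--H\"uckel problems $-\Delta\psi_\rho+\kappa^2\psi_\rho=\rho$ ($\rho=S,I,R$) in exactly the same way, and to combine a test-function (energy) identity for this equation with a Gagliardo--Nirenberg interpolation that cashes in the $L^1$ control already provided by Lemma~\ref{Mass+}. First I would test the elliptic equation with $\psi_\rho^{\,p}$; this is legitimate because the solutions are classical and non-negative, and $p>1$ keeps $\psi_\rho^{\,p-1}$ harmless where $\psi_\rho$ vanishes. Integrating by parts and using the homogeneous Neumann boundary condition gives the identity
\[
\frac{4p}{(p+1)^2}\int_\Omega \bigl|\nabla \psi_\rho^{(p+1)/2}\bigr|^2 + \kappa^2\int_\Omega \psi_\rho^{\,p+1} = \int_\Omega \rho\,\psi_\rho^{\,p},
\]
whose two left-hand terms are non-negative. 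Writing $v:=\psi_\rho^{(p+1)/2}$, this simultaneously controls $\|\nabla v\|_{L^2}^2$ by $\int_\Omega \rho\,\psi_\rho^{\,p}$, while $\int_\Omega\psi_\rho^{\,p+1}=\|v\|_{L^2}^2$ is precisely the quantity to be estimated.

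The crux is to estimate $\|v\|_{L^2}^2$ by Gagliardo--Nirenberg interpolation with the \emph{low} exponent chosen as $r=2/(p+1)$. The point of this choice is that then $\|v\|_{L^r}^r=\int_\Omega\psi_\rho=\kappa^{-2}\int_\Omega\rho\le M/\kappa^2$ is a fixed constant by Lemma~\ref{Mass+}. One obtains
\[
\|v\|_{L^2}^2 \le C\bigl(\|\nabla v\|_{L^2}^{2\theta}\|v\|_{L^r}^{2(1-\theta)}+\|v\|_{L^r}^2\bigr)\le C\,\|\nabla v\|_{L^2}^{2\theta}+C,
\]
where the interpolation exponent solves $\tfrac12=\theta(\tfrac12-\tfrac1n)+(1-\theta)\tfrac{p+1}{2}$, i.e. $\theta=\tfrac{np}{np+2}\in(0,1)$; the strict inequality $\theta<1$ is exactly what will later produce the arbitrarily small coefficient $\varepsilon$.

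Finally I would feed the test-function bound into this estimate. Bounding the right-hand side of the identity by H\"older's inequality, $\int_\Omega\rho\,\psi_\rho^{\,p}\le\|\rho\|_{L^{p+1}}\bigl(\int_\Omega\psi_\rho^{\,p+1}\bigr)^{p/(p+1)}$, and setting $X:=\int_\Omega\psi_\rho^{\,p+1}$ and $A:=\|\rho\|_{L^{p+1}}$, the combined inequality takes the self-referential form $X\le C\,A^{\theta}X^{\mu}+C$ with $\mu=\tfrac{p\theta}{p+1}<1$. A first application of Young's inequality absorbs the $X$ on the right, leaving $X\le C\,A^{\,\theta/(1-\mu)}+C$, where a short computation gives $\theta/(1-\mu)=\tfrac{np(p+1)}{(n+2)p+2}<p+1$. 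Because this accumulated exponent is strictly below $p+1$, a second Young's inequality converts $A^{\theta/(1-\mu)}=\|\rho\|_{L^{p+1}}^{\theta/(1-\mu)}$ into $\varepsilon\,\|\rho\|_{L^{p+1}}^{p+1}+c(\varepsilon)=\varepsilon\int_\Omega\rho^{p+1}+c(\varepsilon)$, which is the assertion.

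I would flag the interpolation-and-Young bookkeeping as the delicate part of the argument. Everything hinges on the choice $r=2/(p+1)$, so that the mass bound from Lemma~\ref{Mass+} closes the lower-order term, and on verifying that the power $\theta/(1-\mu)$ of $\|\rho\|_{L^{p+1}}$ remains strictly below the critical value $p+1$, which is what legitimizes extracting the small constant $\varepsilon$. I would also check the admissibility of the Gagliardo--Nirenberg inequality in the relevant dimension---for $n\ge 3$ it only requires $2<2n/(n-2)$, which always holds---so that no restriction beyond the stated $p>\max(\tfrac n2,1)$ is introduced by the proof.
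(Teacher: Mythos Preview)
Your proof is correct and follows essentially the same route as the paper's intended argument: the paper omits the proof and cites \cite[Lemma~4.1]{tao_Math_M3AS13}, whose method is precisely to test the Debye--H\"uckel equation against $\psi_\rho^{\,p}$, invoke Gagliardo--Nirenberg with the low exponent $r=2/(p+1)$ so that the $L^1$ mass bound from Lemma~\ref{Mass+} closes the interpolation, and then apply Young's inequality (using $\theta<1$) to extract the arbitrarily small coefficient $\varepsilon$. Your bookkeeping of the exponents, in particular the verification that $\theta/(1-\mu)=np(p+1)/((n+2)p+2)<p+1$, matches that argument.
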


We next present the following $L^p$ estimates on $S, I$, and $R$, which play a crucial role in the proof of global existence.
\begin{theorem}[$L^p$ estimate]\label{Thm. SIR DH Lp estimate}
Let $\Omega$ be an open bounded domain in $\R^n$  with smooth boundary $\partial \Omega$. Suppose $S_0,I_0,R_0$ be non-negative functions. Then for any $p > \max(\frac{n}{2},1),$
there exists a constant $C>0$ such that the corresponding solution $(S,I,R)$ of (\ref{IBVP}) satisfies
\[
\int_{\Omega} S^p , \int_{\Omega} I^p, \int_{\Omega} R^p \le C \enspace \text{for all} \enspace t \in (0,T_{\max}).
\]
\end{theorem}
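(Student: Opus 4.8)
The plan is to derive a differential inequality for each of $\int_\Omega S^p$, $\int_\Omega I^p$, and $\int_\Omega R^p$ by testing the corresponding evolution equation in \reff{IBVP} against $p\rho^{p-1}$ ($\rho=S,I,R$), and then to close each inequality by a superlinear absorption argument. The starting observation is that the self-repulsive interaction term carries a favorable sign. Indeed, testing the $S$-equation with $pS^{p-1}$ and integrating by parts twice (all boundary terms vanishing by the Neumann conditions), the interaction contribution becomes
\[
\Gamma_S w_{SS}(p-1)\int_\Omega S^p\,\Delta\psi_S
=\Gamma_S w_{SS}(p-1)\kappa^2\int_\Omega S^p\psi_S-\Gamma_S w_{SS}(p-1)\int_\Omega S^{p+1},
\]
where I have used the Debye--H\"uckel relation $\Delta\psi_S=\kappa^2\psi_S-S$. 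The last term is a strong, sign-definite dissipation of order $p+1$, and it is this term that must dominate everything else. To handle the residual $\kappa^2\int_\Omega S^p\psi_S$, I would apply Young's inequality followed by Lemma~\ref{Lem. psi controlled by rho} to obtain, for arbitrarily small $\eta>0$, a bound $\int_\Omega S^p\psi_S\le\eta\int_\Omega S^{p+1}+C_\eta$. Choosing $\eta$ small relative to $1/\kappa^2$ then shows that the diffusion-plus-interaction part of $\frac{\dd}{\dd t}\int_\Omega S^p$ (the pure diffusion term being nonpositive) is bounded above by $-c\int_\Omega S^{p+1}+C$ for some $c>0$.

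For the $S$-equation the reaction term $-\lambda p\int_\Omega S^pI\le0$ is dissipative and may simply be discarded, so that $\frac{\dd}{\dd t}\int_\Omega S^p\le-c\int_\Omega S^{p+1}+C$. Since $\int_\Omega S^{p+1}\ge c_\Omega\big(\int_\Omega S^p\big)^{(p+1)/p}$ by Hölder's inequality, setting $y:=\int_\Omega S^p$ gives $y'\le-\tilde c\,y^{(p+1)/p}+C$ with superlinear exponent $(p+1)/p>1$; such an ODE keeps $y$ uniformly bounded on $(0,T_{\max})$. Crucially, this argument works for every $p>\max(\frac{n}{2},1)$, so it yields a uniform-in-time bound on $\int_\Omega S^q$ for all such $q$.

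The coupling through the reaction terms is the main obstacle, and I would resolve it by exploiting the cascade structure $S\to I\to R$. For the $I$-equation the troublesome term is $+\lambda p\int_\Omega S I^p$, which Young's inequality bounds as $\lambda p\int_\Omega S I^p\le\delta\int_\Omega I^{p+1}+C(\delta)\int_\Omega S^{p+1}$. The first piece is absorbed by the interaction dissipation $-c\int_\Omega I^{p+1}$ once $\delta$ is chosen small, while the second piece is \emph{already} controlled, being uniformly bounded by the $L^{p+1}$ estimate on $S$ from the previous step. This again reduces matters to $\frac{\dd}{\dd t}\int_\Omega I^p\le-c\int_\Omega I^{p+1}+C$, closed by the same ODE comparison and valid for every $p$ above the threshold. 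Finally, the $R$-equation carries $+\gamma p\int_\Omega I R^{p-1}$, which Young's inequality splits as $\le\delta\int_\Omega R^{p+1}+C(\delta)\int_\Omega I^{(p+1)/2}$; the $R^{p+1}$ part is absorbed by the interaction dissipation, and $\int_\Omega I^{(p+1)/2}$ is controlled by the $L^q$ bounds on $I$ (interpolating with the $L^1$ mass bound of Lemma~\ref{Mass+} should $(p+1)/2$ fall below the threshold). One further ODE comparison then yields the claimed uniform bound on $\int_\Omega R^p$.

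The step I expect to be most delicate is the first one: confirming that the self-repulsive interaction genuinely produces the dissipative term $-c\int_\Omega\rho^{p+1}$ rather than an aggregating one. This hinges on the positivity of $w_{\rho\rho}$ and $\Gamma_\rho$ together with the correct sign in the Debye--H\"uckel relation, and on Lemma~\ref{Lem. psi controlled by rho} being strong enough to render the leftover $\kappa^2\int_\Omega\rho^p\psi_\rho$ term a small perturbation of that dissipation. Were the interaction attractive (some $c_l<0$, as flagged in the stability discussion), this sign would reverse and a Keller--Segel-type blow-up mechanism would preclude such a uniform estimate.
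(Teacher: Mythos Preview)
Your proposal is correct and follows essentially the same route as the paper: test each equation with $p\rho^{p-1}$, integrate by parts, invoke the Debye--H\"uckel relation $\Delta\psi_\rho=\kappa^2\psi_\rho-\rho$ to extract the superlinear dissipation $-c\int_\Omega\rho^{p+1}$, control the residual $\kappa^2\int_\Omega\rho^p\psi_\rho$ via Young's inequality together with Lemma~\ref{Lem. psi controlled by rho}, and then close the cascade $S\to I\to R$ by feeding the already-established $L^{p+1}$ bound on $S$ (resp.\ $L^p$ bound on $I$) into the reaction term of the $I$-equation (resp.\ $R$-equation). The only cosmetic differences are that the paper closes each differential inequality by adding $\int_\Omega\rho^p$ to both sides, absorbing it into $-c\int_\Omega\rho^{p+1}$ via Young, and applying the linear Gronwall lemma to $\frac{\dd}{\dd t}\int_\Omega\rho^p+\int_\Omega\rho^p\le C$, whereas you invoke the superlinear ODE comparison $y'\le-\tilde c\,y^{(p+1)/p}+C$ directly; and for the $R$-equation the paper splits $\int_\Omega IR^{p-1}\le\varepsilon\int_\Omega R^p+C\int_\Omega I^p$ rather than your $\delta\int_\Omega R^{p+1}+C\int_\Omega I^{(p+1)/2}$. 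Both variants work and neither offers a substantive advantage.
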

\begin{proof}
Testing both sides of the first equation in (\ref{IBVP}) with $S^{p-1}$
and employing the fourth equation in (\ref{IBVP}), we have by integration by parts that 
\begin{equation} \nonumber
    \begin{split}
        \frac{1}{p} \frac{d}{dt} \int_{\Omega} S^p &= D_S \int_{\Omega} S^{p-1} \Delta S + \Gamma_S w_{SS} \int_{\Omega} S^{p-1} \nabla \cdot (S\nabla \psi_S) -\lambda \int_{\Omega} S^p I \\
        &= -D_S (p-1) \int_{\Omega} S^{p-2} |\nabla S|^2 - \Gamma_S w_{SS} (p-1) \int_{\Omega} S^{p-1} \nabla S \cdot \nabla \psi_S -\lambda \int_{\Omega} S^p I \\
        &= -\frac{4 D_S (p-1)}{p^2} \int_{\Omega} |\nabla S^{\frac{p}{2}}|^2 + \frac{\Gamma_S w_{SS} (p-1)}{p} \int_{\Omega} S^p (\kappa^2 \psi_S -S) -\lambda \int_{\Omega} S^p I
    \end{split}
\end{equation}
for all $t \in (0,T_{\max})$. By Lemma~\ref{Mass+}, we have 
\begin{equation} \label{SIR DH model S estimate ineq 1}
    \frac{d}{dt} \int_{\Omega} S^p \le \Gamma_S w_{SS} (p-1) \int_{\Omega} S^p (\kappa^2 \psi_S -S) \enspace \text{for all} \enspace t \in (0,T_{\max}).
\end{equation}
By the Young inequality, we have
\[
\int_{\Omega} S^p \psi_S \le \int_{\Omega} \varepsilon_1 S^{p+1} + (\varepsilon_1 \frac{p+1}{p})^{-p} (p+1)^{-1} \int_{\Omega} \psi_S^{p+1} \enspace \text{for all} \enspace t \in (0,T_{\max}). 
\]
Then by (\ref{SIR DH model S estimate ineq 1}), we arrive at
\begin{equation} \label{SIR DH model S estimate ineq 2}
    \frac{d}{dt} \int_{\Omega} S^p \le -\frac{\Gamma_S w_{SS} (p-1)}{2} \int_{\Omega} S^{p+1} + C_1 \int_{\Omega} \psi_S^{p+1} \enspace \text{for all} \enspace t \in (0,T_{\max}),
\end{equation}
where we have taken $\varepsilon_1 = \frac{1}{2\kappa^2}$ and $C_1 = \Gamma_S w_{SS} (p-1) \kappa^2 (\varepsilon_1 \frac{p+1}{p})^{-p} (p+1)^{-1}. $
By Lemma \ref{Lem. psi controlled by rho}, we have  
\[
\int_{\Omega} \psi_S^{p+1} \le \varepsilon_2 \int_{\Omega} S^{p+1} + C_2(\varepsilon_2) \enspace \text{for all} \enspace t \in (0,T_{\max}).
\]
Then, it follows from (\ref{SIR DH model S estimate ineq 2}) that
\begin{equation} \label{SIR DH model S estimate ineq 3}
    \frac{d}{dt} \int_{\Omega} S^p \le -\frac{\Gamma_S w_{SS} (p-1)}{4} \int_{\Omega} S^{p+1} + C_3 \enspace \text{for all} \enspace t \in (0,T_{\max}), 
\end{equation}
where we have taken $\varepsilon_2 = \frac{\Gamma_S w_{SS} (p-1)}{4 C_1}$ and $C_3 = C_1 C_2(\varepsilon_2). $
Adding the term $\int_{\Omega} S^p$ to both sides of (\ref{SIR DH model S estimate ineq 3}) yields
\[
\frac{d}{dt} \int_{\Omega} S^p + \int_{\Omega} S^p \le -\frac{\Gamma_S w_{SS} (p-1)}{4} \int_{\Omega} S^{p+1} + C_3 + \int_{\Omega} S^p \enspace \text{for all} \enspace t \in (0,T_{\max}).
\]
By the Young inequality, we have
\[
\int_{\Omega} S^p \le \int_{\Omega} \varepsilon_3 S^{p+1} + (\varepsilon_3 \frac{p+1}{p})^{-p} (p+1)^{-1} |\Omega| \enspace \text{for all} \enspace t \in (0,T_{\max}). 
\]
Therefore, we have
\begin{equation} \label{SIR DH model S estimate ineq 4}
    \frac{d}{dt} \int_{\Omega} S^p + \int_{\Omega} S^p \le C_4 \enspace \text{for all} \enspace t \in (0,T_{\max}), 
\end{equation}
where we have taken $\varepsilon_3 = \frac{\Gamma_S w_{SS} (p-1)}{4}$ and $C_4 = C_3 + (\varepsilon_3 \frac{p+1}{p})^{-p} (p+1)^{-1} |\Omega|. $
The inequality (\ref{SIR DH model S estimate ineq 4}), together with the Gronwall inequality, yields
\begin{equation} \label{SIR DH model S Lp bdd.}
    \int_{\Omega} S^p \le e^{-t} \int_{\Omega} S_0^p + C_4(1-e^{-t}) \le \int_{\Omega} S_0^p + C_4 := C_S(p) \enspace \text{for all} \enspace t \in (0,T_{\max}). 
\end{equation}
Similarly, using $I^{p-1}$ as a test function for the second equation in (\ref{IBVP}), integrating by parts and employing the fifth equation in (\ref{IBVP}), we have
\begin{equation} \nonumber
    \begin{split}
        \frac{1}{p} \frac{d}{dt} \int_{\Omega} I^p &= D_I \int_{\Omega} I^{p-1} \Delta I + \Gamma_I w_{II} \int_{\Omega} I^{p-1} \nabla \cdot (I\nabla \psi_I) + \lambda \int_{\Omega} S I^p - \gamma \int_{\Omega} I^p \\
        &= -D_I (p-1) \int_{\Omega} I^{p-2} |\nabla I|^2 - \Gamma_I w_{II} (p-1) \int_{\Omega} I^{p-1} \nabla I \cdot \nabla \psi_I + \lambda \int_{\Omega} S I^p - \gamma \int_{\Omega} I^p \\
        &= -\frac{4 D_I (p-1)}{p^2} \int_{\Omega} |\nabla I^{\frac{p}{2}}|^2 + \frac{\Gamma_I w_{II} (p-1)}{p} \int_{\Omega} I^p (\kappa^2 \psi_I -I) + \lambda \int_{\Omega} S I^p - \gamma \int_{\Omega} I^p
    \end{split}
\end{equation}
for all $t \in (0,T_{\max})$. By Lemma \ref{Mass+}, we have 
\begin{equation} \label{SIR DH model I estimate ineq 1}
    \frac{d}{dt} \int_{\Omega} I^p \le \Gamma_I w_{II} (p-1) \int_{\Omega} I^p (\kappa^2 \psi_I -I) + p \lambda \int_{\Omega} S I^p \enspace \text{for all} \enspace t \in (0,T_{\max}).
\end{equation}
By the Young inequality, we have
\begin{equation} \label{SIR DH model I estimate ineq 2}
    \begin{split}
        & \int_{\Omega} S I^p \le \int_{\Omega} \varepsilon_4 I^{p+1} + (\varepsilon_4 \frac{p+1}{p})^{-p} (p+1)^{-1} \int_{\Omega} S^{p+1}, \\
        & \int_{\Omega} I^p \psi_I \le \int_{\Omega} \varepsilon_5 I^{p+1} + (\varepsilon_5 \frac{p+1}{p})^{-p} (p+1)^{-1} \int_{\Omega} \psi_I^{p+1}
    \end{split}
\end{equation}
for all $t \in (0,T_{\max})$. Combining (\ref{SIR DH model I estimate ineq 1}) and (\ref{SIR DH model I estimate ineq 2}), we arrive at
\begin{equation} \label{SIR DH model I estimate ineq 3}
    \frac{d}{dt} \int_{\Omega} I^p \le -\frac{\Gamma_I w_{II} (p-1)}{2} \int_{\Omega} I^{p+1} + C_5 \int_{\Omega} \psi_I^{p+1} + C_6 \int_{\Omega} S^{p+1} \enspace \text{for all} \enspace t \in (0,T_{\max}),
\end{equation}
where $\varepsilon_4 = \frac{\Gamma_I w_{II} (p-1)}{4p \lambda}$, $\varepsilon_5 = \frac{1}{4 \kappa^2}$, $C_5 = \Gamma_I w_{II} (p-1) \kappa^2 (\varepsilon_5 \frac{p+1}{p})^{-p} (p+1)^{-1}$, and $C_6 = p \lambda (\varepsilon_4 \frac{p+1}{p})^{-p} (p+1)^{-1}. $
By Lemma \ref{Lem. psi controlled by rho}, we have 
\[
\int_{\Omega} \psi_I^{p+1} \le \varepsilon_6 \int_{\Omega} I^{p+1} + C_7(\varepsilon_6) \enspace \text{for all} \enspace t \in (0,T_{\max}).
\]
Combining the above estimates, (\ref{SIR DH model S Lp bdd.}) and (\ref{SIR DH model I estimate ineq 3}), we have
\begin{equation} \label{SIR DH model I estimate ineq 4}
    \frac{d}{dt} \int_{\Omega} I^p \le -\frac{\Gamma_I w_{II} (p-1)}{4} \int_{\Omega} I^{p+1} + C_8 \enspace \text{for all} \enspace t \in (0,T_{\max}), 
\end{equation}
where we have taken $\varepsilon_6 = \frac{\Gamma_I w_{II} (p-1)}{4 c_5}$ and $C_8 = C_5 C_7(\varepsilon_6) + C_6 C_S(p+1). $
Adding the term $\int_{\Omega} I^p$ to both sides of (\ref{SIR DH model I estimate ineq 4}) yields
\[
\frac{d}{dt} \int_{\Omega} I^p + \int_{\Omega} I^p \le -\frac{\Gamma_I w_{II} (p-1)}{4} \int_{\Omega} I^{p+1} + C_8 + \int_{\Omega} I^p \enspace \text{for all} \enspace t \in (0,T_{\max}).
\]
By the Young inequality, we have
\[
\int_{\Omega} I^p \le \int_{\Omega} \varepsilon_7 I^{p+1} + (\varepsilon_7 \frac{p+1}{p})^{-p} (p+1)^{-1} |\Omega| \enspace \text{for all} \enspace t \in (0,T_{\max}). 
\]
Therefore, we have
\begin{equation} \label{SIR DH model I estimate ineq 5}
    \frac{d}{dt} \int_{\Omega} I^p + \int_{\Omega} I^p \le C_9 \enspace \text{for all} \enspace t \in (0,T_{\max}), 
\end{equation}
where we have taken $\varepsilon_7 = \frac{\Gamma_I w_{II} (p-1)}{4}$ and $C_9 = C_8 + (\varepsilon_7 \frac{p+1}{p})^{-p} (p+1)^{-1} |\Omega|. $
By the Gronwall inequality, (\ref{SIR DH model I estimate ineq 5}) yields
\begin{equation} \label{SIR DH model I Lp bdd.}
    \int_{\Omega} I^p \le e^{-t} \int_{\Omega} I_0^p + C_9(1-e^{-t}) \le \int_{\Omega} I_0^p + C_9 := C_I(p) \enspace \text{for all} \enspace t \in (0,T_{\max}).
\end{equation}
Analogously, using $R^{p-1}$ as a test function for the third equation in (\ref{IBVP}), integrating by parts and employing the sixth equation in (\ref{IBVP}), we have
\begin{equation} \nonumber
    \begin{split}
        \frac{1}{p} \frac{d}{dt} \int_{\Omega} R^p &= D_R \int_{\Omega} R^{p-1} \Delta R + \Gamma_R w_{RR} \int_{\Omega} R^{p-1} \nabla \cdot (R\nabla \psi_R) +\gamma \int_{\Omega} I R^{p-1} \\
        &= -D_R (p-1) \int_{\Omega} R^{p-2} |\nabla R|^2 - \Gamma_R w_{RR} (p-1) \int_{\Omega} R^{p-1} \nabla R \cdot \nabla \psi_R +\gamma \int_{\Omega} I R^{p-1} \\
        &= -\frac{4 D_R (p-1)}{p^2} \int_{\Omega} |\nabla R^{\frac{p}{2}}|^2 + \frac{\Gamma_R w_{RR} (p-1)}{p} \int_{\Omega} R^p (\kappa^2 \psi_R -R) +\gamma \int_{\Omega} I R^{p-1}
    \end{split}
\end{equation}
for all $t \in (0,T_{\max})$. Then we arrive at
\begin{equation} \label{SIR DH model R estimate ineq 1}
    \frac{d}{dt} \int_{\Omega} R^p \le \Gamma_R w_{RR} (p-1) \int_{\Omega} R^p (\kappa^2 \psi_R -R) + p \gamma \int_{\Omega} I R^{p-1} \enspace \text{for all} \enspace t \in (0,T_{\max}). 
\end{equation}
By the Young inequality, we have
\begin{equation} \label{SIR DH model R estimate ineq 2}
    \begin{split}
        & \int_{\Omega} I R^{p-1} \le \int_{\Omega} \varepsilon_8 R^p + \int_{\Omega} (\varepsilon_8 \frac{p}{p-1})^{-(p-1)} p^{-1} I^p, \\
        & \int_{\Omega} R^p \psi_R \le \int_{\Omega} \varepsilon_9 R^{p+1} +\int_{\Omega} (\varepsilon_9 \frac{p+1}{p})^{-p} (p+1)^{-1} \psi_R^{p+1}
    \end{split}
\end{equation}
for all $t \in (0,T_{\max})$. Combining (\ref{SIR DH model R estimate ineq 1}) and (\ref{SIR DH model R estimate ineq 2}), we have
\begin{equation} \label{SIR DH model R estimate ineq 3}
    \frac{d}{dt} \int_{\Omega} R^p \le -\frac{\Gamma_R w_{RR} (p-1)}{2} \int_{\Omega} R^{p+1} + C_{10} \int_{\Omega} \psi_R^{p+1} + \int_{\Omega} R^p + C_{11} \int_{\Omega} I^p \enspace \text{for all} \enspace t \in (0,T_{\max}),  
\end{equation}
where we have taken $\varepsilon_8 = \frac{1}{p \gamma}$, $ \varepsilon_9 = \frac{1}{2 \kappa^2}$, $C_{10} = \Gamma_R w_{RR} (p-1) \kappa^2 (\varepsilon_9 \frac{p+1}{p})^{-p} (p+1)^{-1}$, and $C_{11} = p \gamma (\varepsilon_8 \frac{p}{p-1})^{-(p-1)} p^{-1}. $
By Lemma \ref{Lem. psi controlled by rho}, we have  
\[
\int_{\Omega} \psi_R^{p+1} \le \varepsilon_{10} \int_{\Omega} R^{p+1} + C_{12}(\varepsilon_{10}) \enspace \text{for all} \enspace t \in (0,T_{\max}).
\]
Combining the above estimates, (\ref{SIR DH model I Lp bdd.}) and (\ref{SIR DH model R estimate ineq 3}), we arrive at
\begin{equation} \label{SIR DH model R estimate ineq 4}
    \frac{d}{dt} \int_{\Omega} R^p \le -\frac{\Gamma_R w_{RR} (p-1)}{4} \int_{\Omega} R^{p+1} + \int_{\Omega} R^p + C_{13} \enspace \text{for all} \enspace t \in (0,T_{\max}), 
\end{equation}
where $\varepsilon_{10} = \frac{\Gamma_R w_{RR} (p-1)}{4 C_{10}}$ and $C_{13} = C_{10}C_{12}(\varepsilon_{10}) + C_{11} C_I(p). $
Adding the term $\int_{\Omega} R^p$ to both sides of (\ref{SIR DH model R estimate ineq 4}) yields
\[
\frac{d}{dt} \int_{\Omega} R^p  + \int_{\Omega} R^p \le -\frac{\Gamma_R w_{RR} (p-1)}{4} \int_{\Omega} R^{p+1} + 2 \int_{\Omega} R^p + C_{13} \enspace \text{for all} \enspace t \in (0,T_{\max}). 
\]
By the Young inequality, we have
\[
\int_{\Omega} R^p \le \int_{\Omega} \varepsilon_{11} R^{p+1} + (\varepsilon_{11} \frac{p+1}{p})^{-p} (p+1)^{-1} |\Omega| \enspace \text{for all} \enspace t \in (0,T_{\max}). 
\]
Then we have
\begin{equation} \label{SIR DH model R estimate ineq 5}
    \frac{d}{dt} \int_{\Omega} R^p  + \int_{\Omega} R^p \le C_{14} \enspace \text{for all} \enspace t \in (0,T_{\max}), 
\end{equation}
where we have taken $\varepsilon_{11} = \frac{\Gamma_R w_{RR} (p-1)}{8}$ and $C_{14} = C_{13} + 2 (\varepsilon_{11} \frac{p+1}{p})^{-p} (p+1)^{-1} |\Omega|. $
Applying the Gronwall inequality to (\ref{SIR DH model R estimate ineq 5}) yields
\begin{equation} \label{SIR DH model R Lp bdd.}
    \int_{\Omega} R^p \le e^{-t} \int_{\Omega} R_0^p + C_{14}(1-e^{-t}) \le \int_{\Omega} R_0^p + C_{14} := C_R(p) \enspace \text{for all} \enspace t \in (0,T_{\max}).
\end{equation}
This completes the proof.
\end{proof}

Following the technique of the Moser-Alikakos iteration~\cite{Alikakos79, tao_JDE12}, we can further establish the $L^{\infty}$ bounds for $S$, $I$, and $R$. Therefore, the following result on global existence of the solution to the model~\reff{IBVP} can be obtained by applying the Lemma~\ref{lemmaLocal}.
\begin{theorem}[Global existence]
Let $\Omega$ be an open bounded domain in $\R^n$ with a smooth boundary $\partial \Omega$. Suppose $S_0, I_0, R_0 \in W^{1,\infty}(\Omega)$ are non-negative functions. Then 
there exist unique non-negative bounded functions $S, I, R, \psi_S, \psi_I, \psi_R \in C^0(\Bar{\Omega} \times [0,\infty)) \cap C^{2,1}(\Bar{\Omega} \times (0,\infty))$ solving (\ref{IBVP}) classically.
\end{theorem}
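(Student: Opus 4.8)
The plan is to bootstrap the uniform $L^p$ bounds of Theorem \ref{Thm. SIR DH Lp estimate} up to uniform $L^\infty$ bounds by a Moser-Alikakos iteration, and then to feed these into the extensibility criterion of Lemma \ref{lemmaLocal} to exclude finite-time blow-up. First I would fix some $p_0 > \max(\frac n2,1)$, so that by Theorem \ref{Thm. SIR DH Lp estimate} the norms $\|S(\cdot,t)\|_{L^{p_0}}$, $\|I(\cdot,t)\|_{L^{p_0}}$, $\|R(\cdot,t)\|_{L^{p_0}}$ are bounded uniformly in $t\in(0,T_{\max})$. Applying standard $L^{p_0}$ elliptic regularity to the Helmholtz-type problems $-\Delta\psi_\rho+\kappa^2\psi_\rho=\rho$ with homogeneous Neumann data gives $\|\psi_\rho(\cdot,t)\|_{W^{2,p_0}(\Omega)}\le C\|\rho(\cdot,t)\|_{L^{p_0}(\Omega)}$, and since $p_0>\frac n2$ the embedding $W^{2,p_0}(\Omega)\hookrightarrow L^\infty(\Omega)$ yields uniform-in-time bounds $\|\psi_S\|_{L^\infty},\|\psi_I\|_{L^\infty},\|\psi_R\|_{L^\infty}\le K$. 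This is the decisive gain: it decouples the interaction potentials from the iteration, turning the coupling terms into lower-order contributions.

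Next I would run the iteration on $S$ first. Testing the $S$-equation with $S^{p-1}$ as in the proof of Theorem \ref{Thm. SIR DH Lp estimate} and using $\Delta\psi_S=\kappa^2\psi_S-S$, both the self-repulsive term $-\tfrac{\Gamma_S w_{SS}(p-1)}{p}\int_\Omega S^{p+1}$ and the reaction term $-\lambda\int_\Omega S^pI$ are nonpositive and may simply be discarded, so with $v:=S^{p/2}$ one reaches
\[
\frac{d}{dt}\int_\Omega v^2 + \frac{4(p-1)D_S}{p}\int_\Omega|\nabla v|^2 \le (p-1)\Gamma_S w_{SS}\kappa^2 K\int_\Omega v^2 .
\]
Interpolating through the Gagliardo-Nirenberg inequality $\|v\|_{L^2}^2\le\varepsilon\|\nabla v\|_{L^2}^2+C_\varepsilon\|v\|_{L^1}^2$ with $\varepsilon$ chosen proportional to $1/p$ absorbs the gradient term and produces an inequality of the form $\frac{d}{dt}\int_\Omega v^2+\int_\Omega v^2\le a\,b^{\,p}\big(\int_\Omega S^{p/2}\big)^2$. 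Writing $p_k:=2^kp_0$ and $M_k:=\sup_{t\in(0,T_{\max})}\int_\Omega S^{p_k}$, Gronwall's inequality delivers the Alikakos recursion $M_k\le\max\{\int_\Omega S_0^{p_k},\,a\,b^{\,p_k}M_{k-1}^2\}$; taking $p_k$-th roots and letting $k\to\infty$ (using $S_0\in L^\infty$) gives a bound on $\|S(\cdot,t)\|_{L^\infty}$ uniform in $t\in(0,T_{\max})$. With $\|S\|_{L^\infty}$ available the same scheme applies to $I$, where the coupling term $\lambda\int_\Omega S\,I^p\le\lambda\|S\|_{L^\infty}\int_\Omega I^p$ is now of the admissible form, and then to $R$, where the source $\gamma\int_\Omega I\,R^{p-1}$ is controlled by $\|I\|_{L^\infty}$ together with Young's inequality. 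The outcome is uniform bounds $\|S\|_{L^\infty},\|I\|_{L^\infty},\|R\|_{L^\infty}\le C$ on $(0,T_{\max})$.

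Finally, these uniform $L^\infty$ bounds contradict the blow-up alternative of Lemma \ref{lemmaLocal}: were $T_{\max}<\infty$, that lemma would force $\|S(\cdot,t)\|_{L^\infty}+\|I(\cdot,t)\|_{L^\infty}+\|R(\cdot,t)\|_{L^\infty}\to\infty$ as $t\to T_{\max}$. Hence $T_{\max}=\infty$, and the local solution supplied by Lemma \ref{lemmaLocal}, which is unique, non-negative and of class $C^0(\bar\Omega\times[0,\infty))\cap C^{2,1}(\bar\Omega\times(0,\infty))$, is in fact global. I expect the genuine obstacle to lie in the Moser-Alikakos bookkeeping, namely tracking the precise $p$-dependence of the constants generated by the Gagliardo-Nirenberg and Young inequalities so that the prefactor $a\,b^{\,p_k}$ grows at most exponentially in $p_k$ and the recursion for $M_k^{1/p_k}$ converges; the remaining ingredients reduce to the $L^p$ machinery and the elliptic estimate established above.
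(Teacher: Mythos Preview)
Your proposal is correct and follows essentially the same route as the paper, which merely states that the $L^\infty$ bounds follow from the $L^p$ estimates of Theorem~\ref{Thm. SIR DH Lp estimate} via a Moser--Alikakos iteration and then invokes the extensibility criterion of Lemma~\ref{lemmaLocal}. Your intermediate step of using elliptic $W^{2,p_0}$ regularity together with the Sobolev embedding (valid since $p_0>n/2$) to obtain uniform $L^\infty$ bounds on $\psi_S,\psi_I,\psi_R$ first is a clean way to execute the iteration, and is fully consistent with the paper's outline.
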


\section{Numerical Investigation}\label{Sec. Numerical results}
To solve the nonlinear DFT-SIR-DH model, we propose a structure-preserving numerical scheme that guarantees positivity, mass conservation, and energy dissipation, based on the ones developed in our previous works~\cite{LiuYueWangWiseZhou_MathComput21, DingZhou_JCP24, DingXuZhou_DCSD-B25, DingWangZhou_MathComput25}. The time-splitting schemes~\cite{DingXuZhou_DCSD-B25} that respect energy dissipation are applied to separate the model into a reaction subproblem and a subproblem of Keller-Segal equations (or Poisson--Nernst--Planck equations). Then the subproblem of Keller-Segal equations can be implicitly treated using logarithmic variables to preserve positivity and  energy dissipation simultaneously~\cite{LiuYueWangWiseZhou_MathComput21, DingZhou_JCP24, DingWangZhou_MathComput25}. The numerical schemes will be reported in a separate work in future.

In this section, we first numerically investigate instability phenomenon of the DFT-SIR-DH model~\reff{DH-SIR-DFT} arising from social interactions and infectious reactions.
Consider a computational domain $\Omega = [0,1] \times [0,1]$. Unless specified otherwise, we take the following parameters and interaction intensity matrix in our numerical studies:
\[
\beta = 1, N=1, c_1=1, \gamma = 1, \kappa_1 = 1, W = 
\begin{pmatrix}
w_{SS}  & 100 & w_{SR} \\
100  & 100 & 100\\
w_{RS} & 100 & w_{RR}
\end{pmatrix},
D_j=1, V_j^{\rm ext}=0, 
\]
for $j=S, I, R$.
The initial conditions are given by
\begin{equation}\label{PertInit}
    S_0(\bm{x}) = S_{\rm hom}^* + \xi(\bm{x}), ~I_0(\bm{x}) = I_{\rm hom}^* + \xi(\bm{x}),~ R_0(\bm{x}) = R_{\rm hom}^* + \xi(\bm{x}), 
\end{equation}
where 
\[
\xi(\bm{x}) = 10^{-3}\max\{ \cos(10\pi x)\cos(10\pi y), 0 \}
\]
is a small perturbation added to the homogeneous steady state
\[
S_{\rm hom}^* = 0.8,~ I_{\rm hom}^* = 0, ~R_{\rm hom}^* = 0.2. 
\]

\subsection{Instability and distribution}
\begin{figure}[htbp]
    \centering
    \subfloat{
    \includegraphics[width=0.33\linewidth]{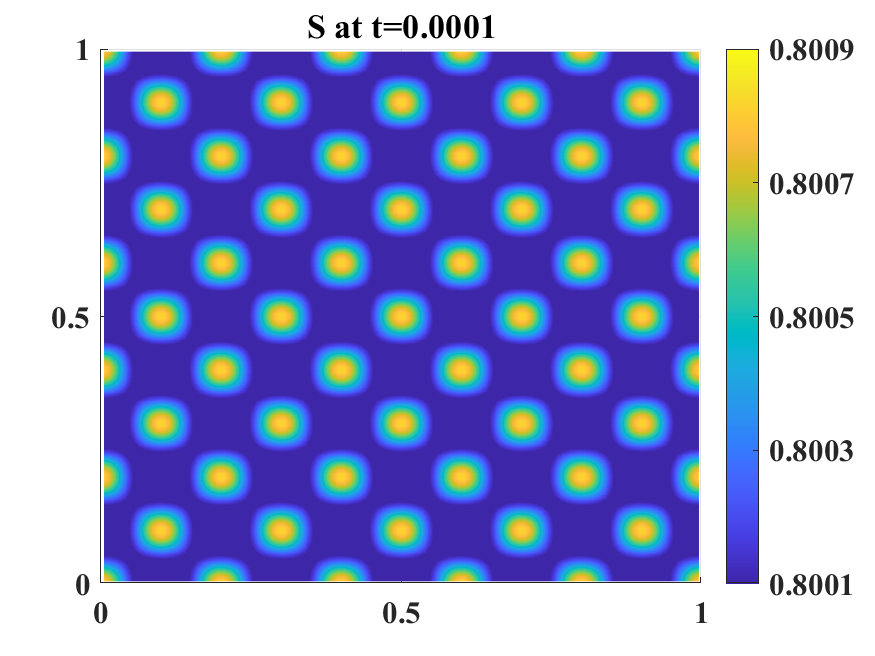}
    }
    \subfloat{
    \includegraphics[width=0.33\linewidth]{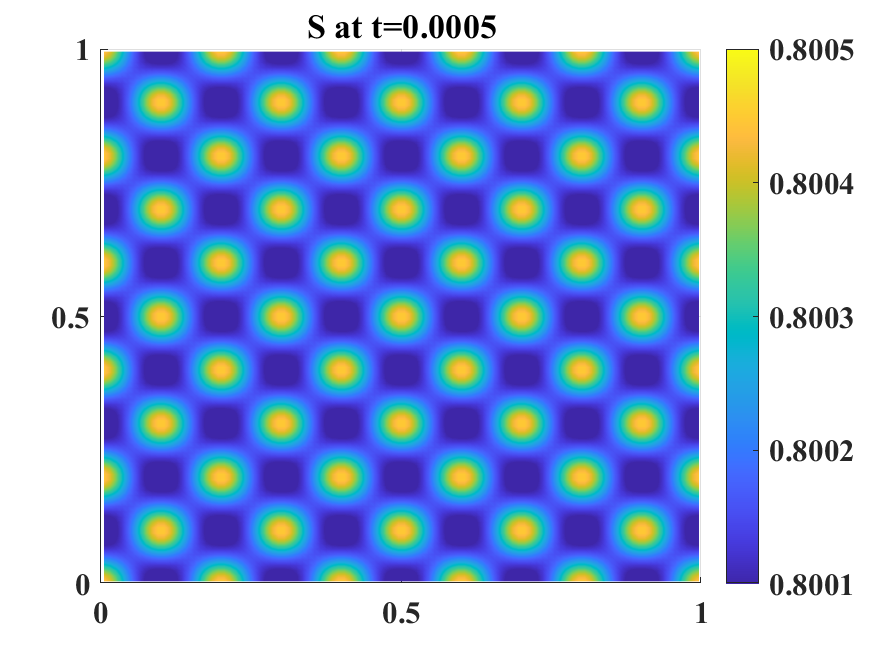}
    }
    \subfloat{
    \includegraphics[width=0.33\linewidth]{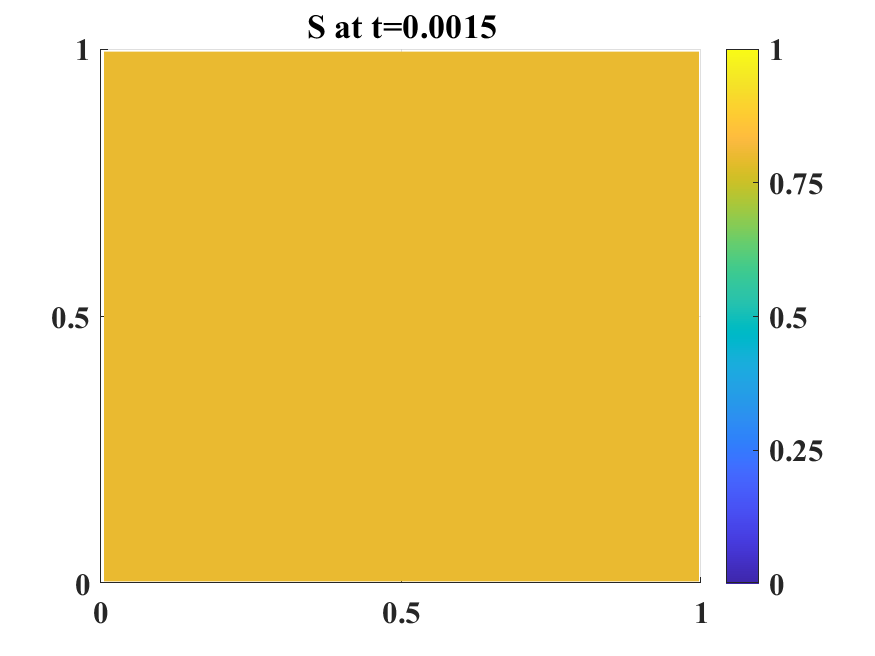}
    }

    \subfloat{
    \includegraphics[width=0.33\linewidth]{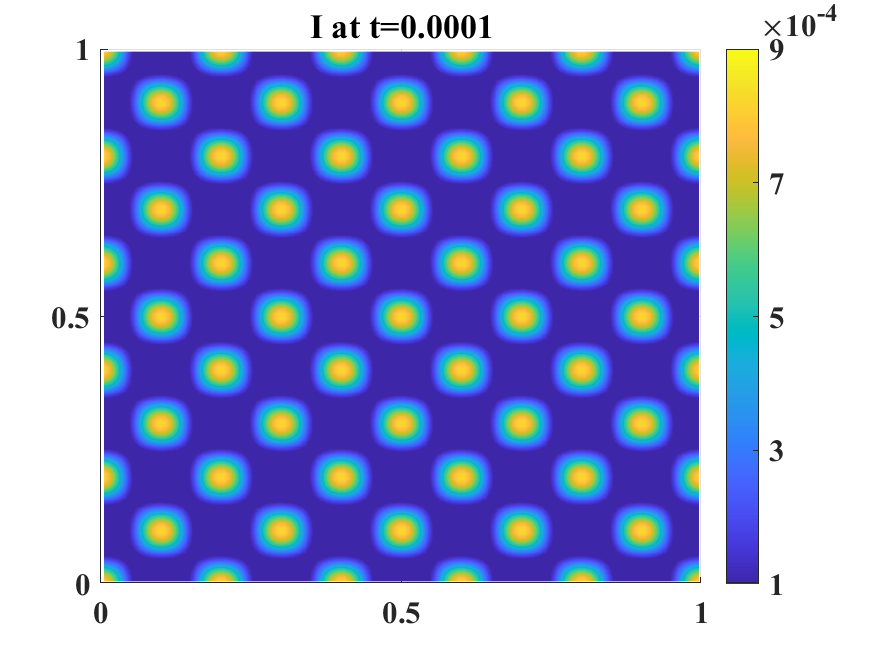}
    }
    \subfloat{
    \includegraphics[width=0.33\linewidth]{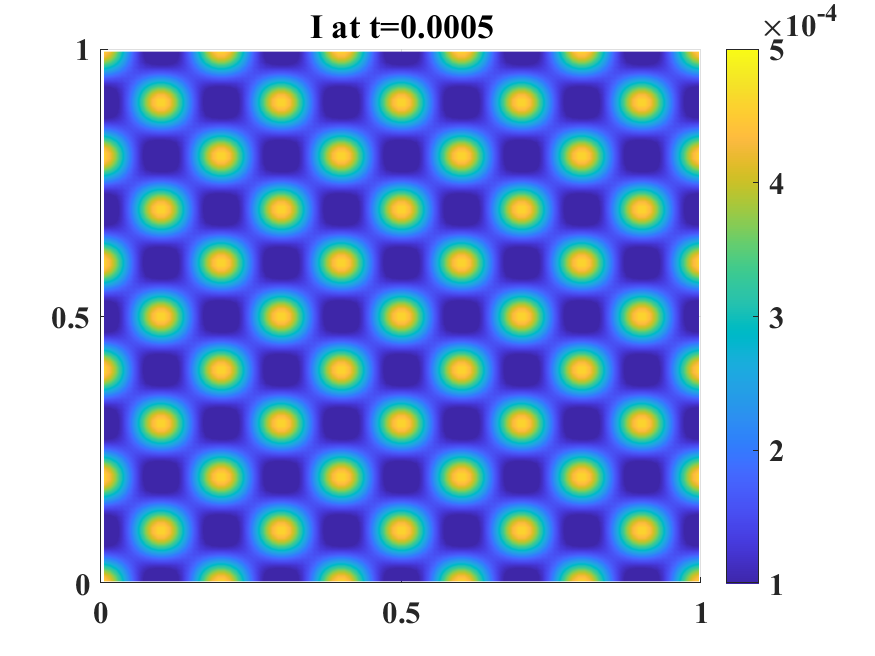}
    }
    \subfloat{
    \includegraphics[width=0.33\linewidth]{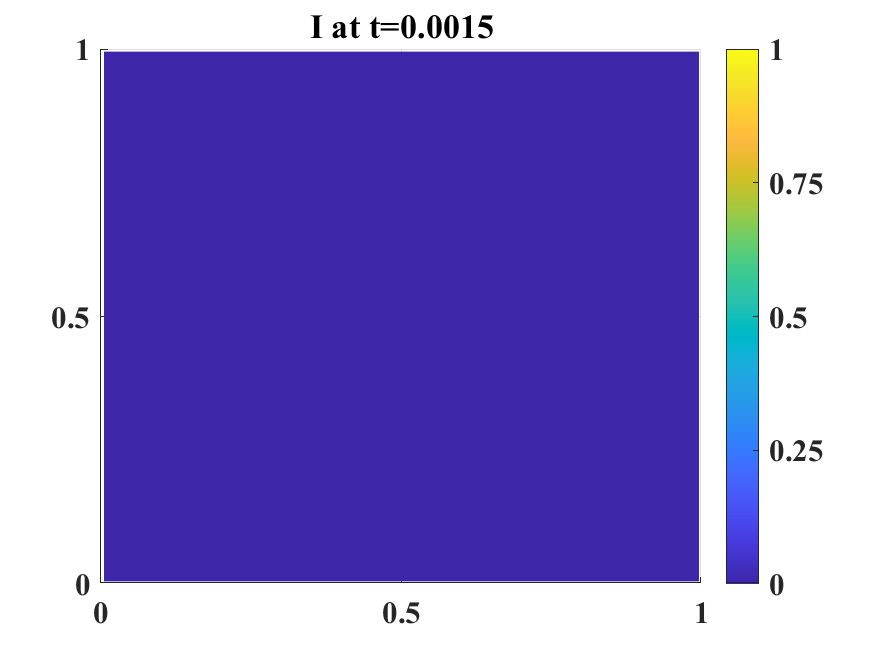}
    }

    \subfloat{
    \includegraphics[width=0.33\linewidth]{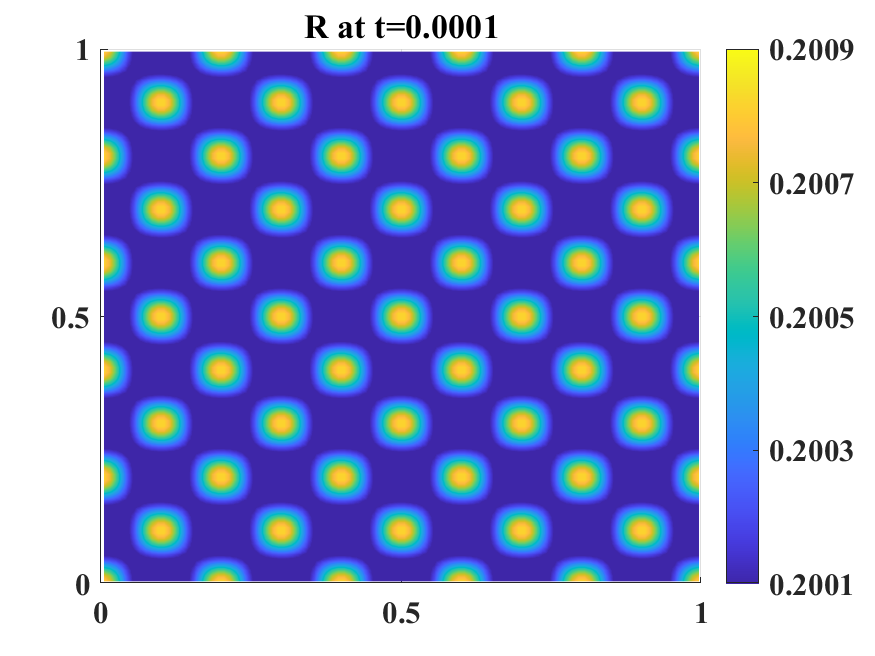}
    }
    \subfloat{
    \includegraphics[width=0.33\linewidth]{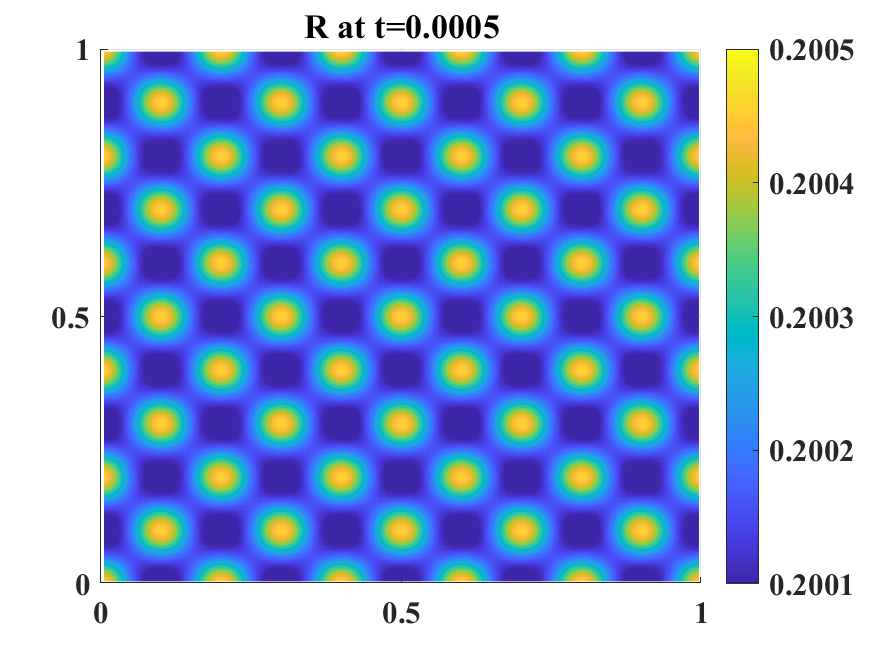}
    }
    \subfloat{
    \includegraphics[width=0.33\linewidth]{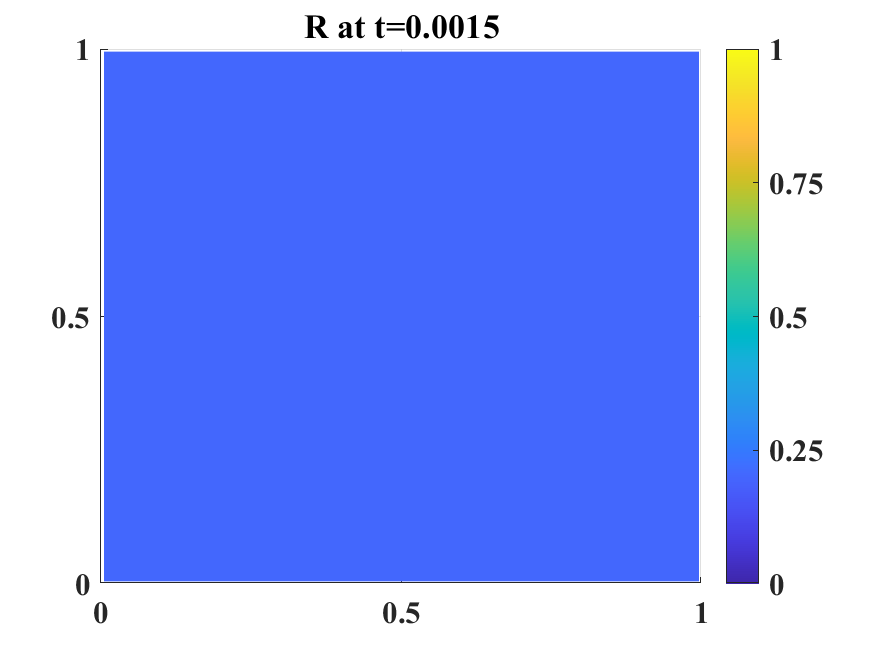}
    }
    \caption{Evolution of $S$, $I$, and $R$ with initial perturbed conditions \reff{PertInit} using $\lambda=1$, $w_{SS} = w_{RR}=1$, and $w_{SR} = w_{RS}=1$.}
    \label{fig:E-type linear stable}
\end{figure}
\begin{figure}[htbp]
    \centering
    \subfloat{
    \includegraphics[width=0.33\linewidth]{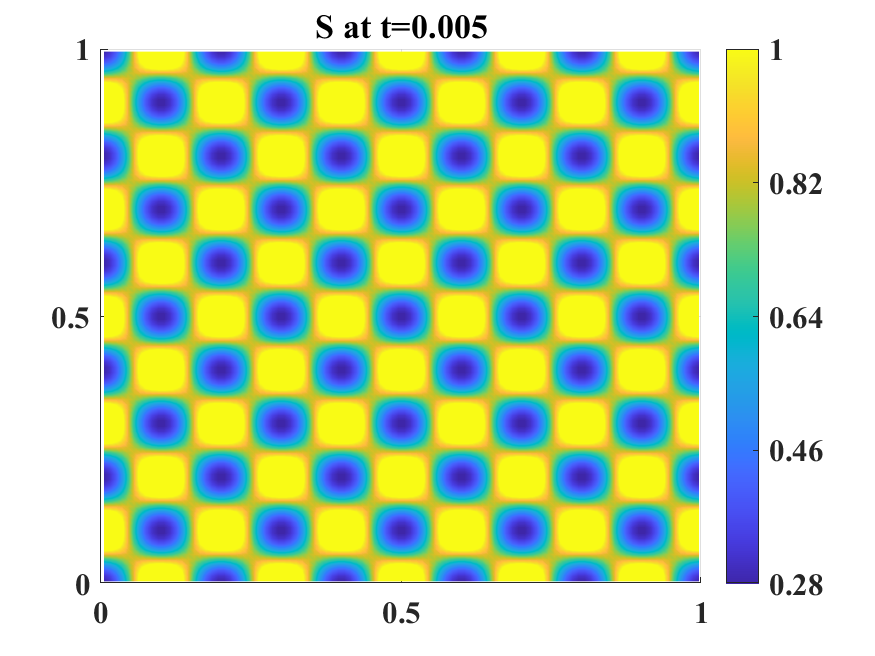}
    }
    \subfloat{
    \includegraphics[width=0.33\linewidth]{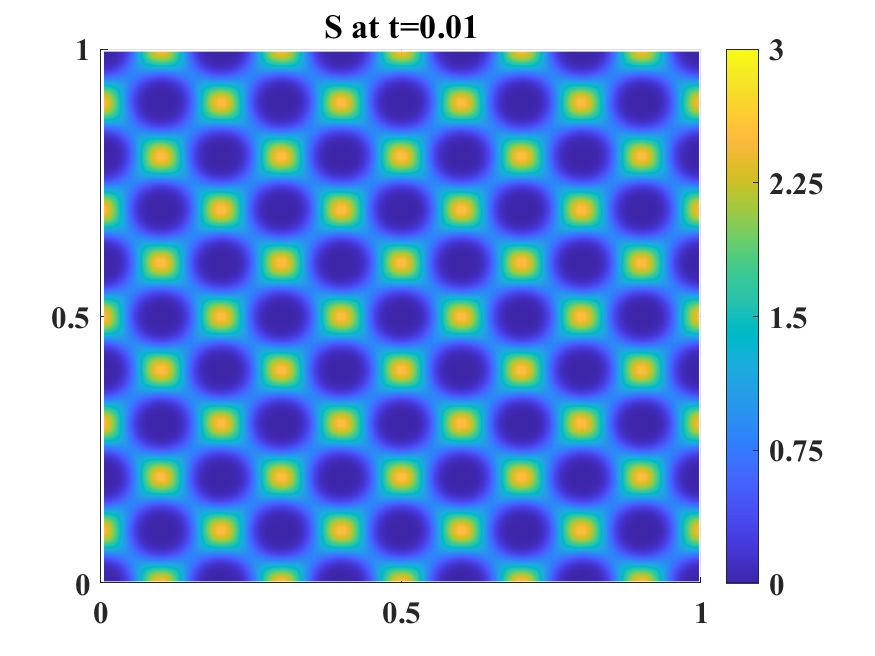}
    }
    \subfloat{
    \includegraphics[width=0.33\linewidth]{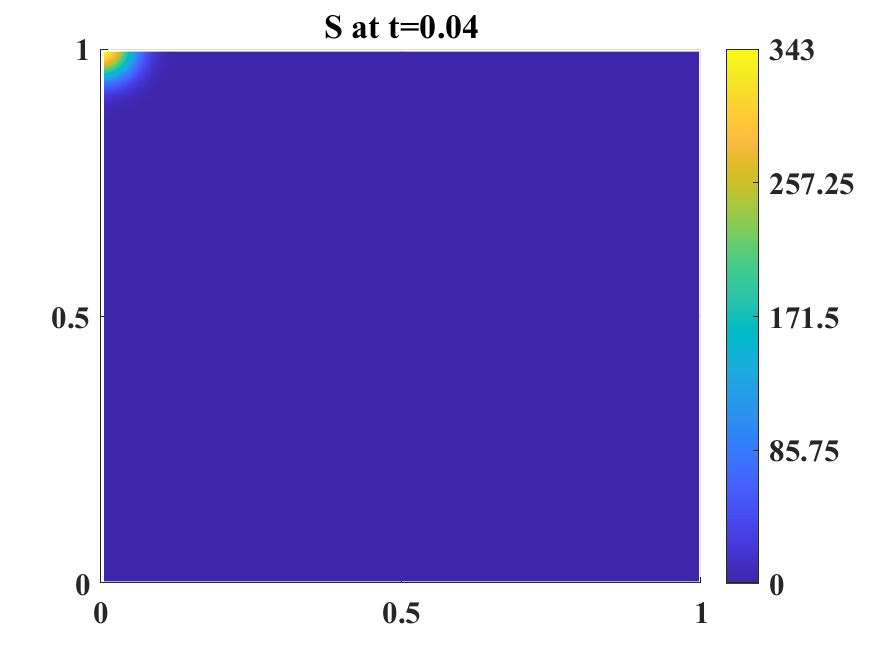}
    }

    \subfloat{
    \includegraphics[width=0.33\linewidth]{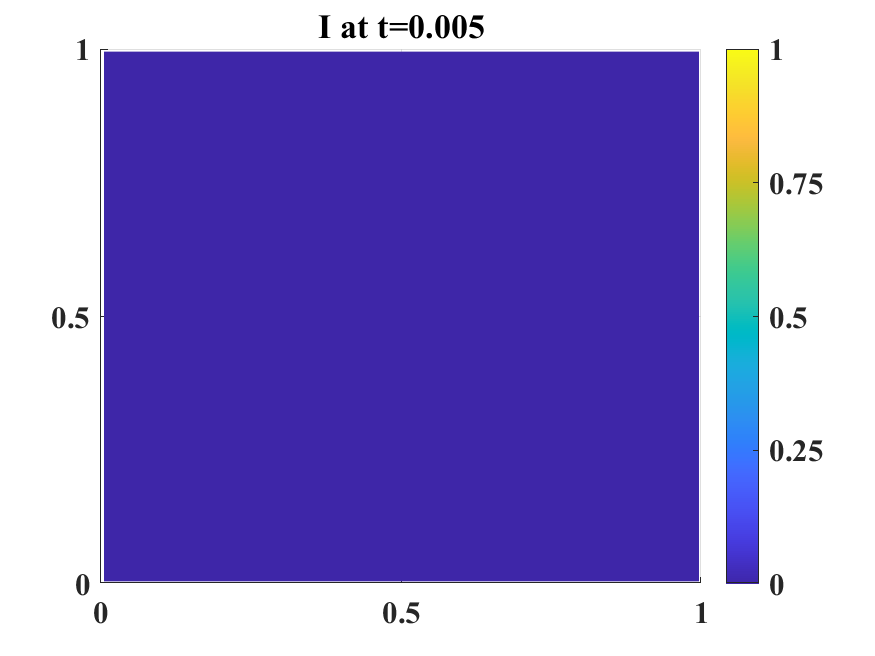}
    }
    \subfloat{
    \includegraphics[width=0.33\linewidth]{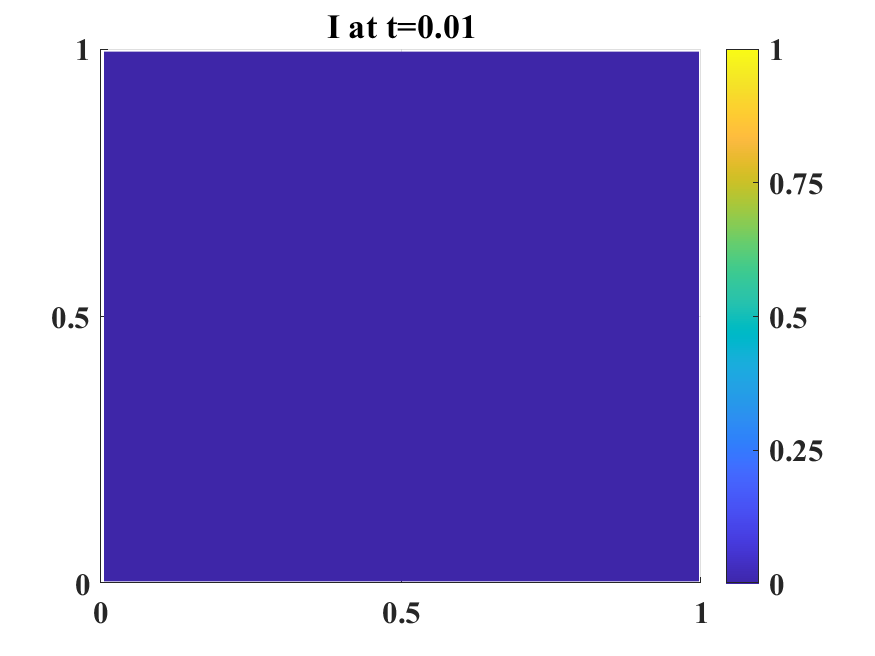}
    }
    \subfloat{
    \includegraphics[width=0.33\linewidth]{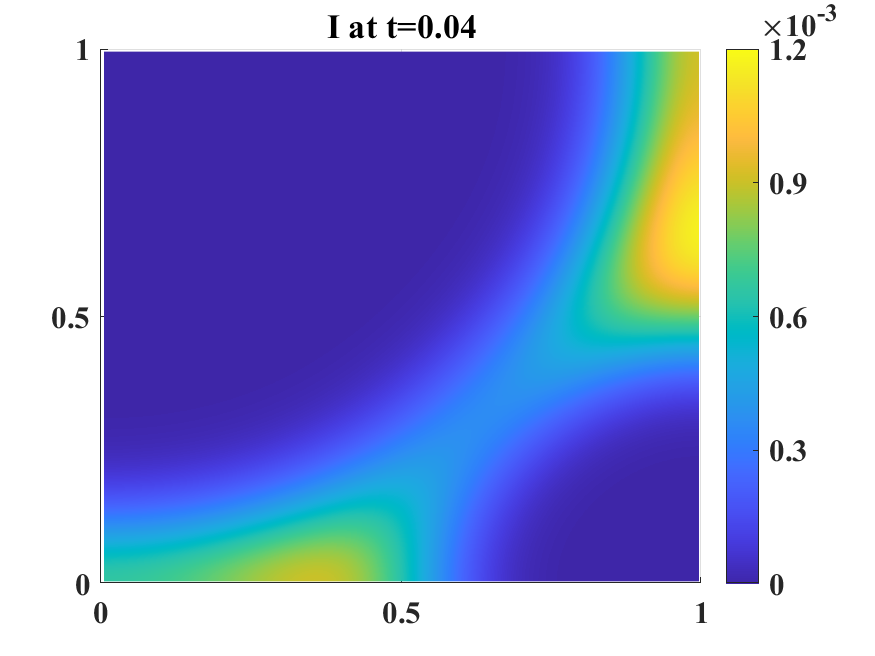}
    }

    \subfloat{
    \includegraphics[width=0.33\linewidth]{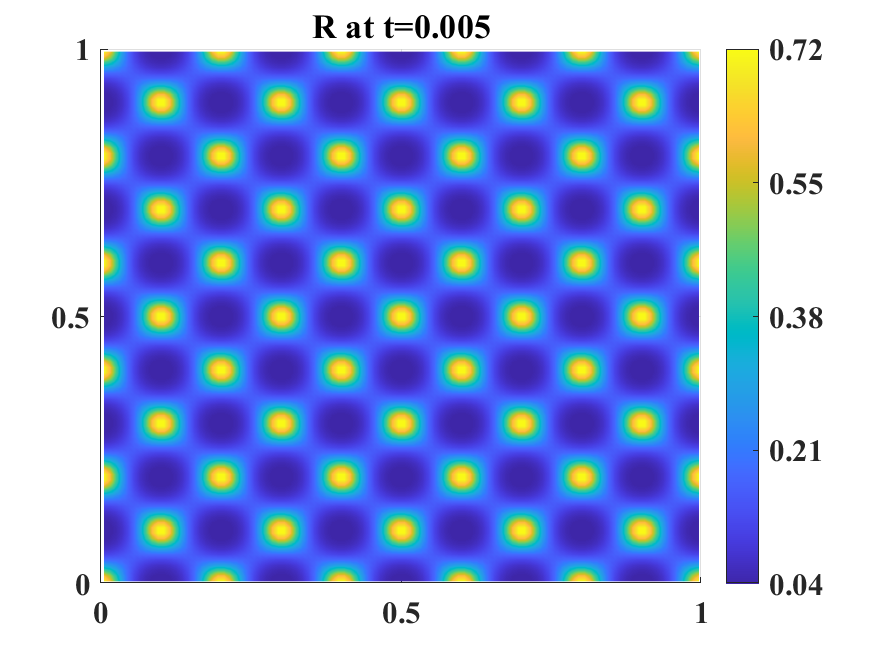}
    }
    \subfloat{
    \includegraphics[width=0.33\linewidth]{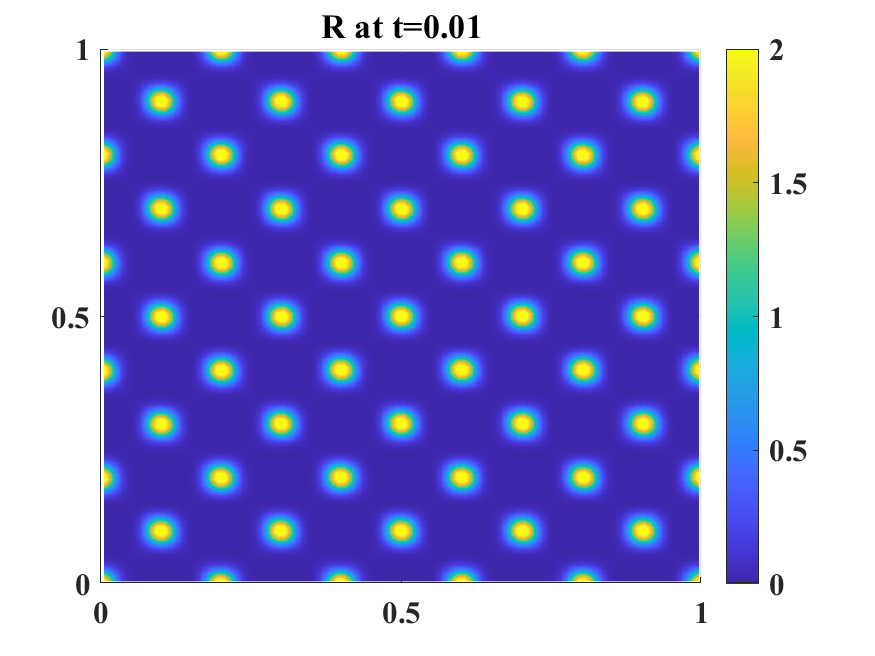}
    }
    \subfloat{
    \includegraphics[width=0.33\linewidth]{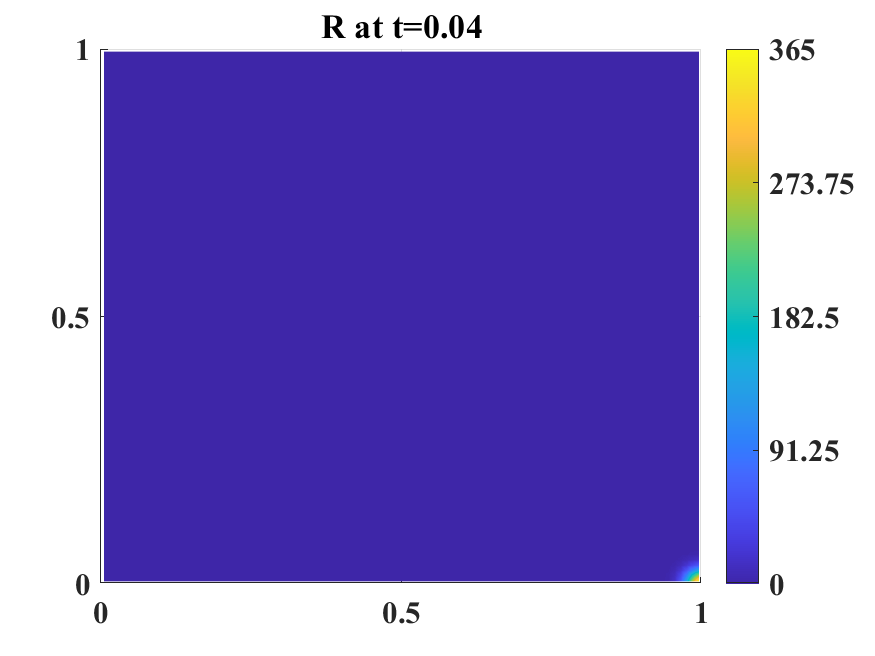}
    }
    \caption{Evolution of $S$, $I$, and $R$ with initial perturbed conditions \reff{PertInit} using $\lambda=1$, $w_{SS} = w_{RR}=1$, and $w_{SR} = w_{RS}=10^4$.}
    \label{fig:E-type linear unstable}
\end{figure}
We now probe different behaviors of instability that takes place according to the conditions revealed by the linear stability analysis in Theorem \ref{t:LBA}. 
First, we fix $\lambda = 1$, $w_{SS} = w_{RR}=1$, and $w_{SR} = w_{RS}=1$, for which neither the reaction induced instability condition \reff{SIR DH lin.stb. condition Reaction}, corresponding to $\mathcal{R}_0 := \lambda S_{\rm hom}^*/\gamma < 1$, nor the social interaction induced instability condition \reff{SIR DH lin.stb. condition Energy QC} holds. Figure~\ref{fig:E-type linear stable} presents the distributions of $S$, $I$, and $R$ at various time $t$. As shown, the perturbed initial densities, as expected from the linear stability analysis in Theorem \ref{t:LBA}, gradually level off and tend to corresponding homogeneous states $S_{\rm hom}$, $I_{\rm hom}^*$, and $R_{\rm hom}^*.$

We now focus on the instability induced by social interactions. The numerical simulations take the parameters $\lambda = 1$, $w_{SS} = w_{RR}=1$, and $w_{SR} = w_{RS}=10^4$. In such a case, it can be readily verified that the instability condition due to social interactions \reff{SIR DH lin.stb. condition Energy QC} holds. As seen from Figure~\ref{fig:E-type linear unstable}, checker spatial pattern quickly emerges for $S$ due to large cross social interactions~\cite{BenJacob_PhysD85,Banerjee_PhysA20}. As time evolves, the high $S$ density further accumulates and develops isolated dotted peaks to avoid social interactions. In the very end, the susceptible population concentrates in the left upper corner with a large magnitude, reflecting the large repulsive social interactions adopted in simulations. 

For the infected population $I$, the density evolves from the dotted pattern into two peaks locating at right upper and lower left corners due to the social interactions. For the recovered individuals, the density $R$ first develops higher and higher peaks at isolated positions, and eventually all accumulates at the lower right corner with a significantly high concentration. In contrast to the distribution of $S$, it is easy to find that the strong, repulsive social interactions between $S$ and $R$ lead to such extreme distributions of $R$. Overall, one can observe that the individuals of the same kind all finally gather as clusters at corners due to the repulsive interactions under consideration. We need to remark that the location of the corners depends on the initial perturbed conditions~\reff{PertInit}.

\subsection{Phase Diagram}
\begin{figure}[htbp]
    \centering
    \includegraphics[width=0.618\linewidth]{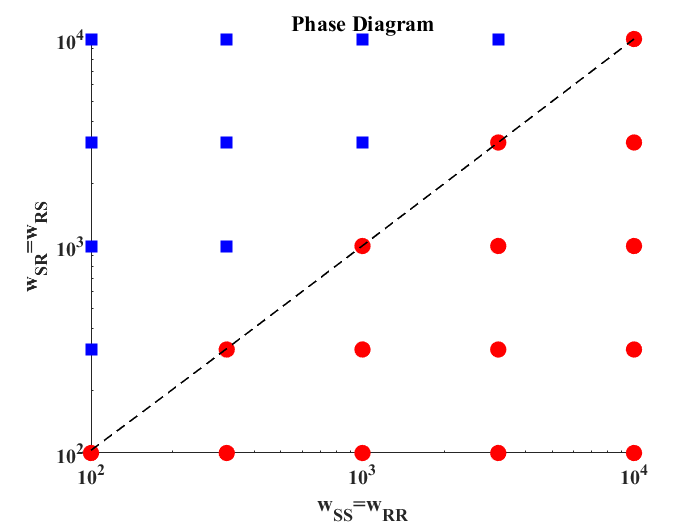}
    \caption{Phase diagram on stability of the base state ($S_{\rm hom}^*, I_{\rm hom}^*, R_{\rm hom}^*$) with random perturbations. Blue filled squares represent instability and red filled circles mean stability. The dashed line is plotted according to $w_{SR}w_{RS} > \left(w_{SS}+1/S_{\rm hom}^* \right)\left(w_{RR}+1/R_{\rm hom}^* \right)$, as given by the condition \reff{SIR DH lin.stb. condition Energy QC}. }
    \label{fig:Phase_Diagram}
\end{figure}
To further understand the instability induced by social interactions, we perform a series of simulations to study the influence of the parameters on the stability of the base state ($S_{\rm hom}^*,~ I_{\rm hom}^*, ~R_{\rm hom}^*$) with random perturbations. The simulations take $\lambda = 1$ for which the condition \reff{SIR DH lin.stb. condition Reaction} for reaction instability is not satisfied. Also, the parameters $w_{SS} = w_{RR}$ and $w_{RS} = w_{SR}$ are varied to explore the parameter space pertaining to stability. Figure~\ref{fig:Phase_Diagram} presents a phase diagram on the stability of the base state using various combinations of $w_{SS} = w_{RR}$ and $w_{RS} = w_{SR}$. The blue filled squares represent instability and red filled circles mean stability. In addition, a dashed hyperbolic curve is plotted as well according to $w_{SR}w_{RS} > \left(w_{SS}+1/S_{\rm hom}^* \right)\left(w_{RR}+1/R_{\rm hom}^* \right)$, which is the unstable condition \reff{SIR DH lin.stb. condition Energy QC} given by the linear stability analysis. The blue filled squares at upper left section indicate that relatively larger $w_{RS} = w_{SR}$, or less diagonal dominate of the interaction intensity matrix $W$, contributes the instability of the homogeneous base state, being quantitatively consistent with the unstable condition \reff{SIR DH lin.stb. condition Energy QC} given by the linear stability analysis. 


\subsection{Cross interactions on transmission}
\begin{figure}[htbp]
    \centering
    \subfloat{
    \includegraphics[width=0.5\linewidth]{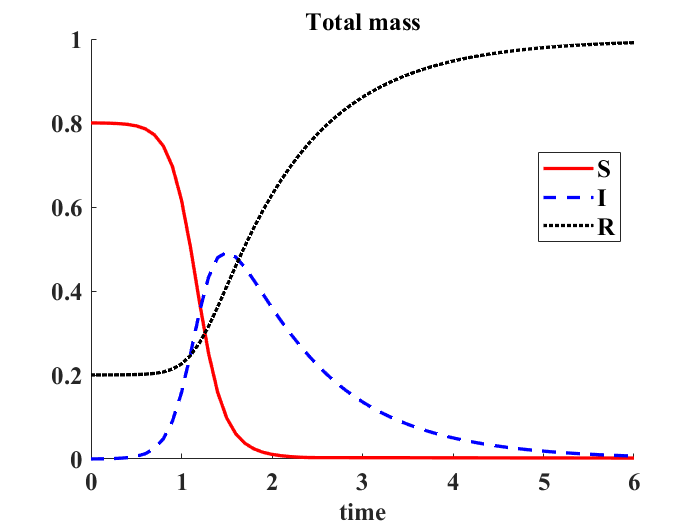}
    }
    \subfloat{
    \includegraphics[width=0.5\linewidth]{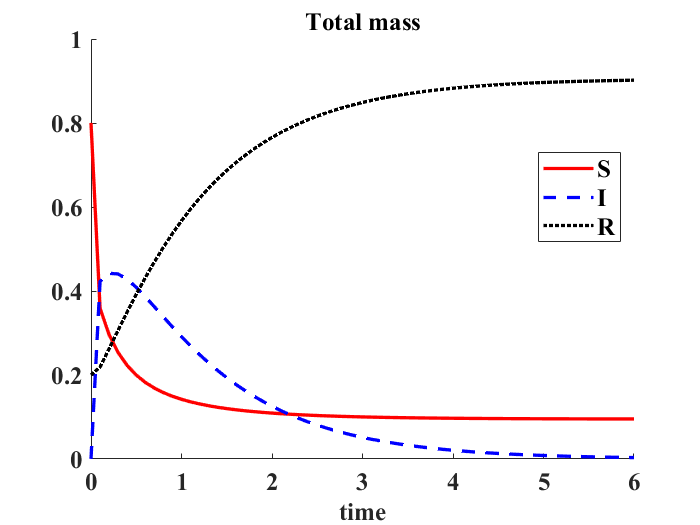}
    }
    \caption{Total mass of the susceptible, infectious and recovered individuals with high transmission rate $\lambda=10$. Left: $w_{SR} = w_{RS}=1$; Right: $w_{SR} = w_{RS}=5000$.}
    \label{fig:total mass}
\end{figure}
In this case, we study the impact of instability induced by social interactions on disease transmission with high transmission rate.  We first set $\lambda=10$, so that the condition \reff{SIR DH lin.stb. condition Reaction} holds. 
As disease outbreaks, the practice of social distancing, via social interactions, is often an effective measure to take to control the epidemic spreading. Here, we investigate the influence of social interactions on the evolution of total mass of different population in epidemic spreading. Figure~\ref{fig:total mass} displays the total mass of the susceptible, infectious and recovered individuals with various $w_{SR}$ (or $w_{RS}$). 
In the left plot, we take $w_{SR} = w_{RS}=1$ to understand the impact of normal intensity of social interactions. In the initial stage, total mass of susceptible, infectious, and recovered individuals remains constant for a while. After that, the disease outbreaks and the total mass of $I$ starts to grow quickly, with the total mass of susceptible individuals dropping to zero. In the very end, all individuals get infected and subsequently recovered. 

In the right plot, we take $w_{SR} = w_{RS}=5000$ to understand the role of strong repulsive social interactions played in epidemic spreading. 
 In such a case, the instability condition due to social interactions~\reff{SIR DH lin.stb. condition Energy QC} holds. In contrast to the left plot, the epidemic spreads immediately with small perturbed initial conditions~\reff{PertInit} due to the instability induced by the strong repulsive social interactions, as indicated by the linear stability analysis in the Theorem \ref{t:LBA}. However, it is noteworthy to mention that the peak value of infected individuals is a bit lower than that for the case of $w_{SR} = w_{RS}=1$ shown in the left plot. After the outbreak, the total mass of infectious individuals decays quickly.   It is of interest to observe that with strong repulsive social interactions, a certain portion of susceptible individuals who accumulate at corners as in Figure~\ref{fig:E-type linear unstable} will not get infected throughout the epidemic. 
 From such results, one can find that the practice of social distancing can effectively suppress the spreading of disease with lower infection peak number and more susceptible individuals who will not get infected throughout the epidemic.
 In summary, our development provides an effective tool for the administration to predict the evolution of epidemic with social interactions.   


\subsection{Isolation simulations}
\begin{figure}[htbp]
    \centering
    \subfloat{
    \includegraphics[width=0.33\linewidth]{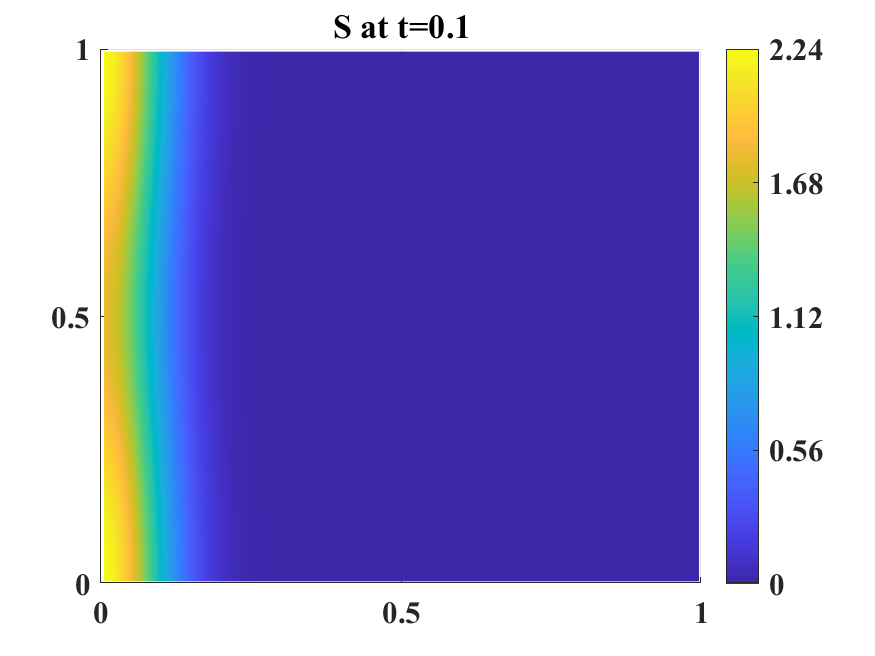}
    }
    \subfloat{
    \includegraphics[width=0.33\linewidth]{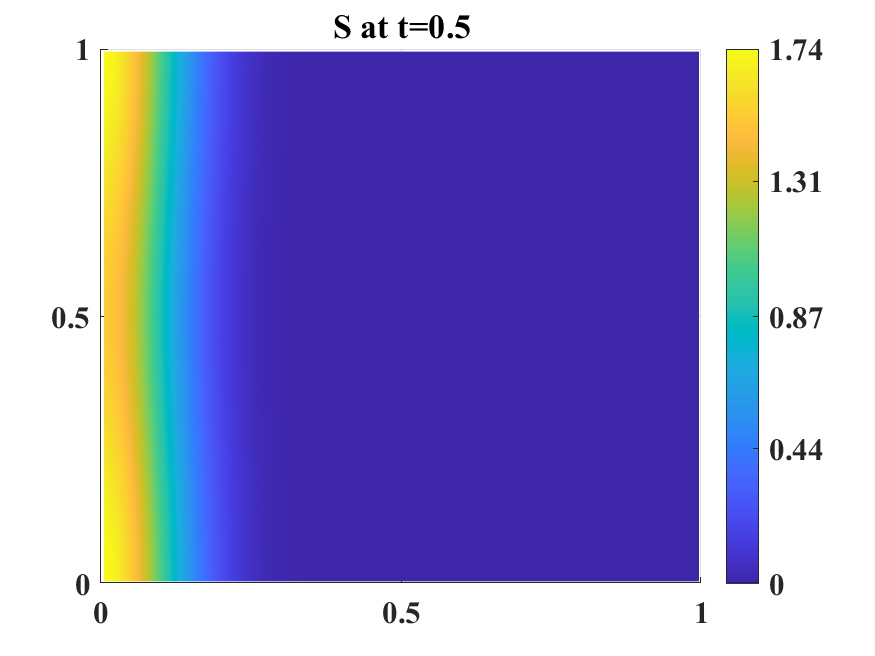}
    }
    \subfloat{
    \includegraphics[width=0.33\linewidth]{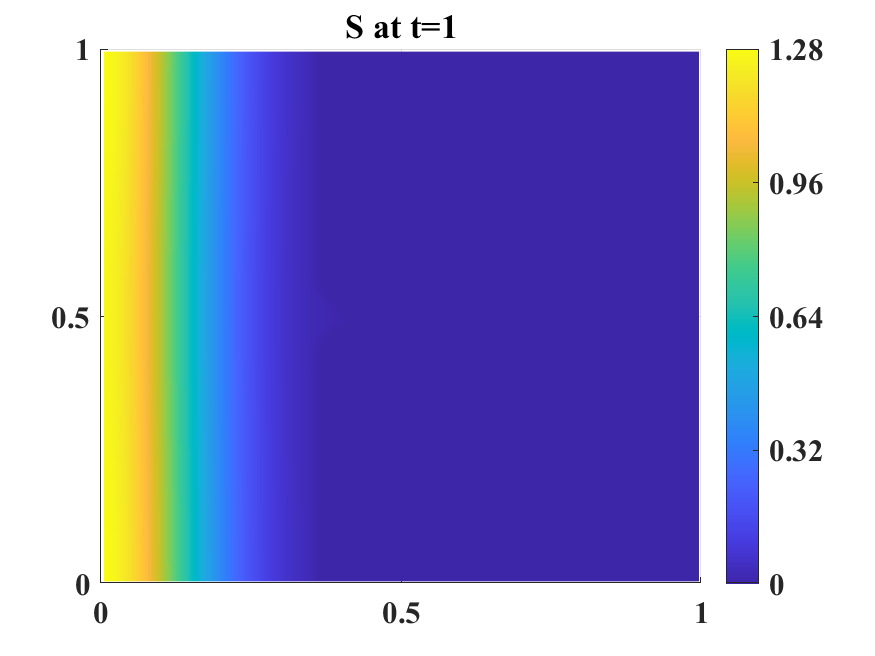}
    }

    \subfloat{
    \includegraphics[width=0.33\linewidth]{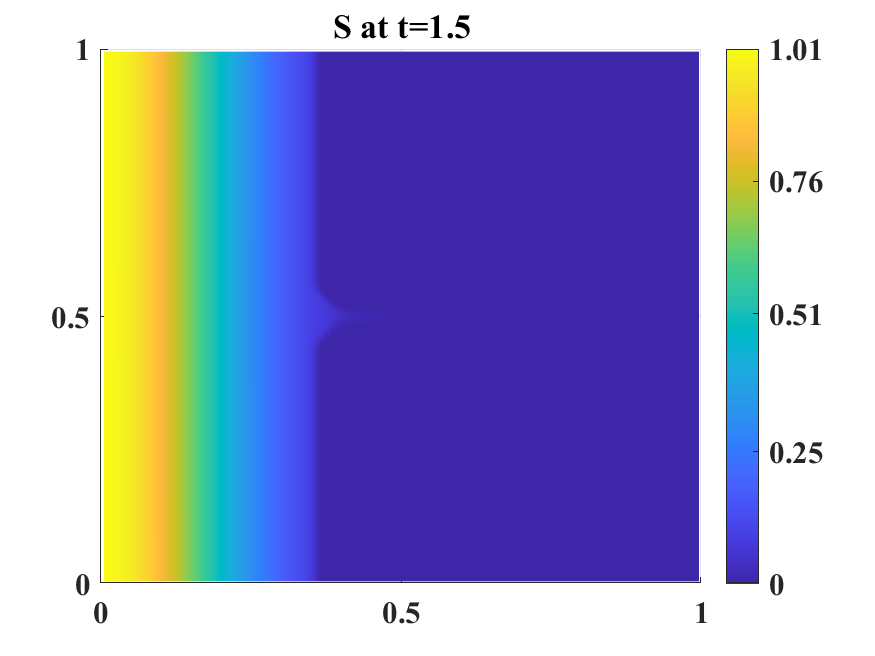}
    }
    \subfloat{
    \includegraphics[width=0.33\linewidth]{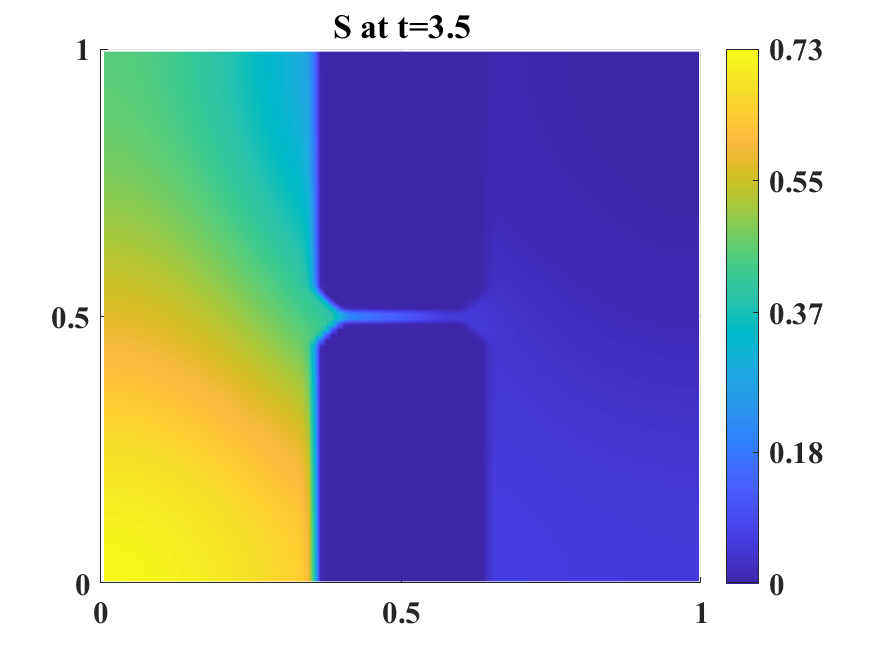}
    }
    \subfloat{
    \includegraphics[width=0.33\linewidth]{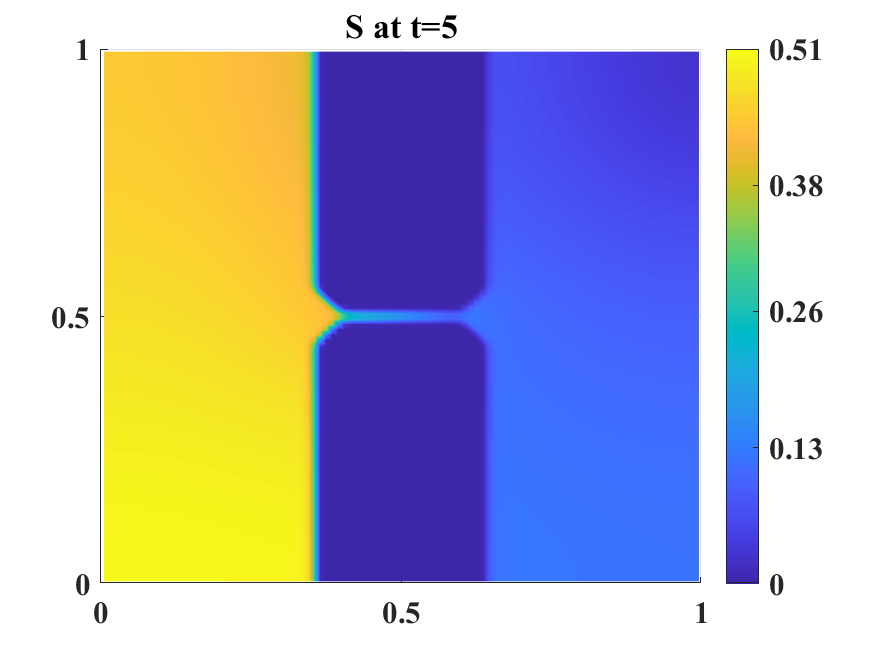}
    }
    \caption{Evolution of $S$ with the presence of an external potential in isolation simulations.}
    \label{fig:Qrt_S}
\end{figure}
\begin{figure}[htbp]
    \centering
    \subfloat{
    \includegraphics[width=0.33\linewidth]{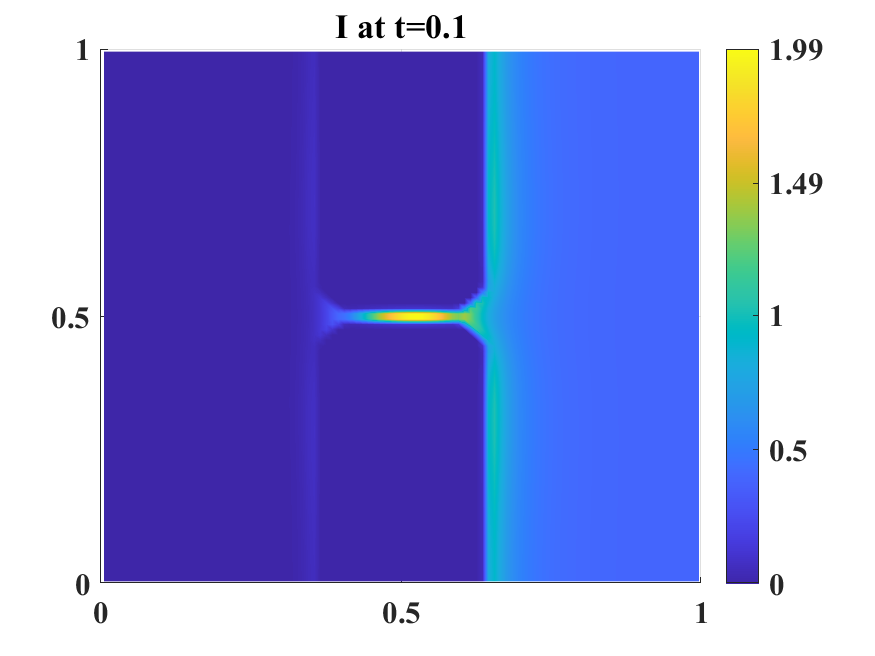}
    }
    \subfloat{
    \includegraphics[width=0.33\linewidth]{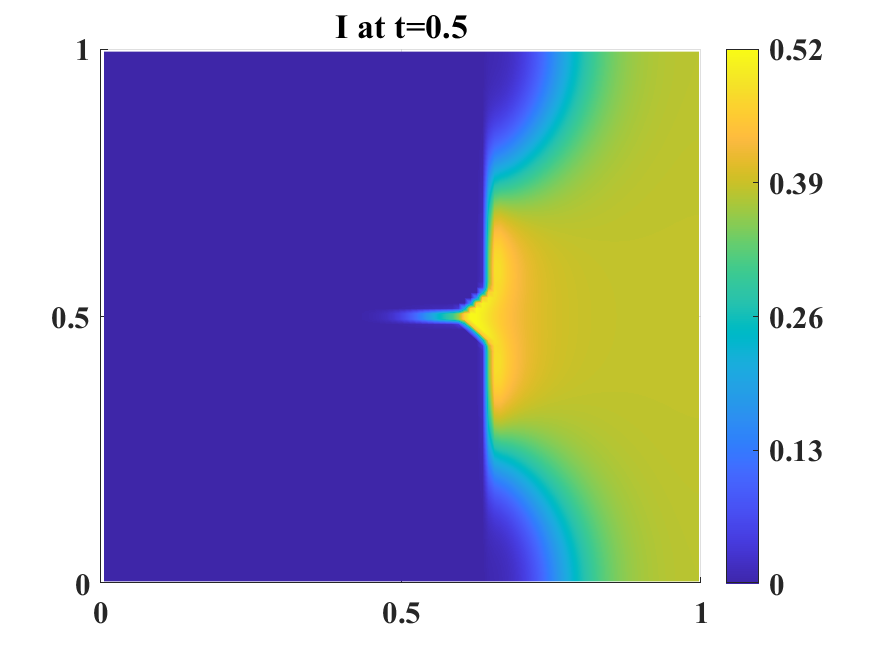}
    }
    \subfloat{
    \includegraphics[width=0.33\linewidth]{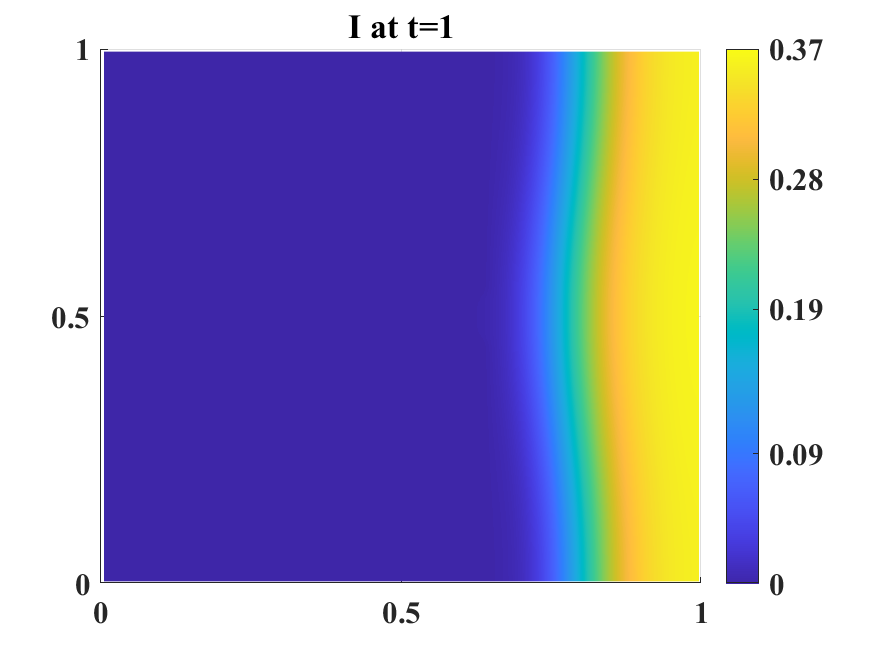}
    }

    \subfloat{
    \includegraphics[width=0.33\linewidth]{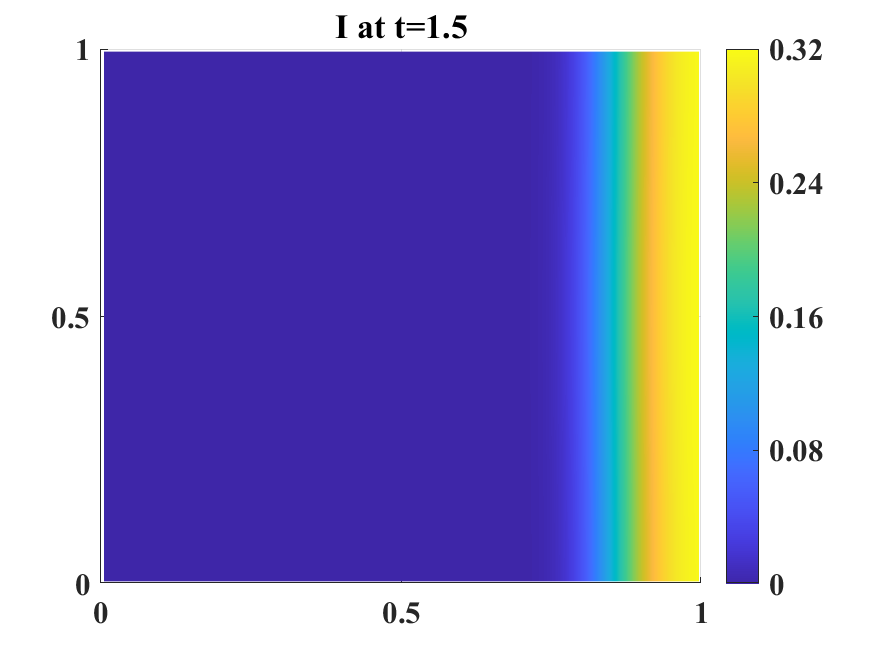}
    }
    \subfloat{
    \includegraphics[width=0.33\linewidth]{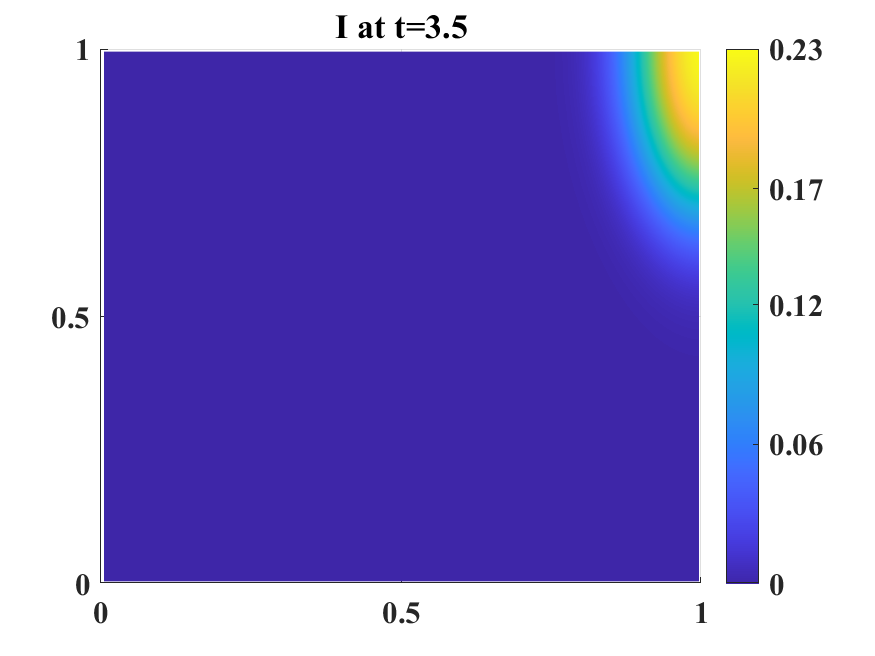}
    }
    \subfloat{
    \includegraphics[width=0.33\linewidth]{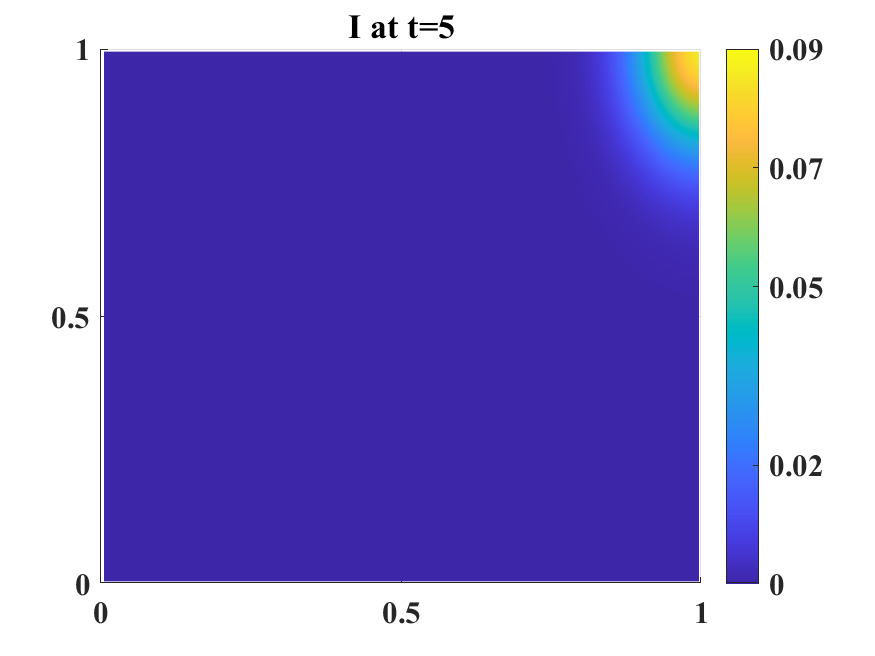}
    }
    \caption{Evolution of $I$ with the presence of an external potential in isolation simulations.}
    \label{fig:Qrt_I}
\end{figure}

\begin{figure}[htbp]
    \centering
    \subfloat{
    \includegraphics[width=0.33\linewidth]{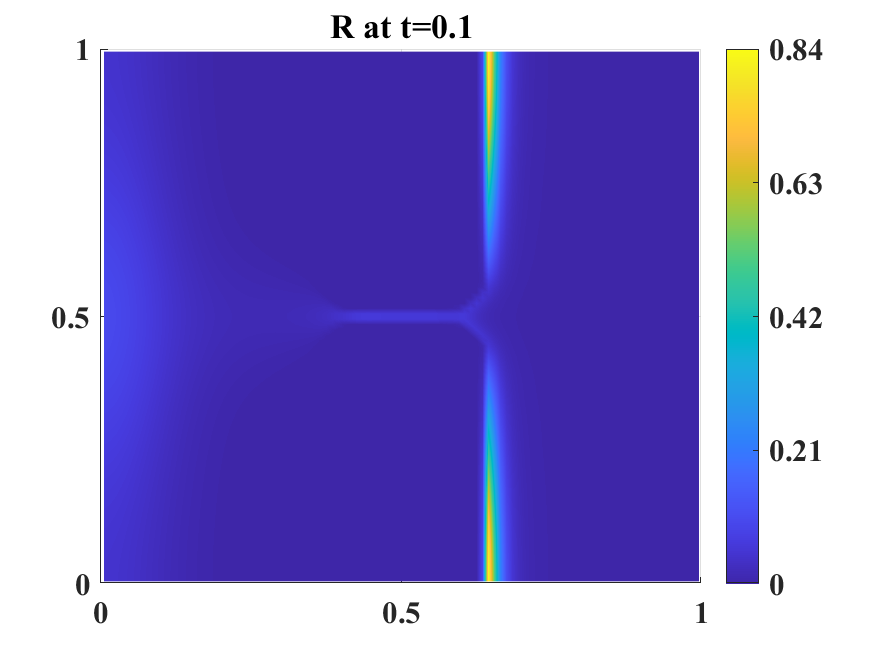}
    }
    \subfloat{
    \includegraphics[width=0.33\linewidth]{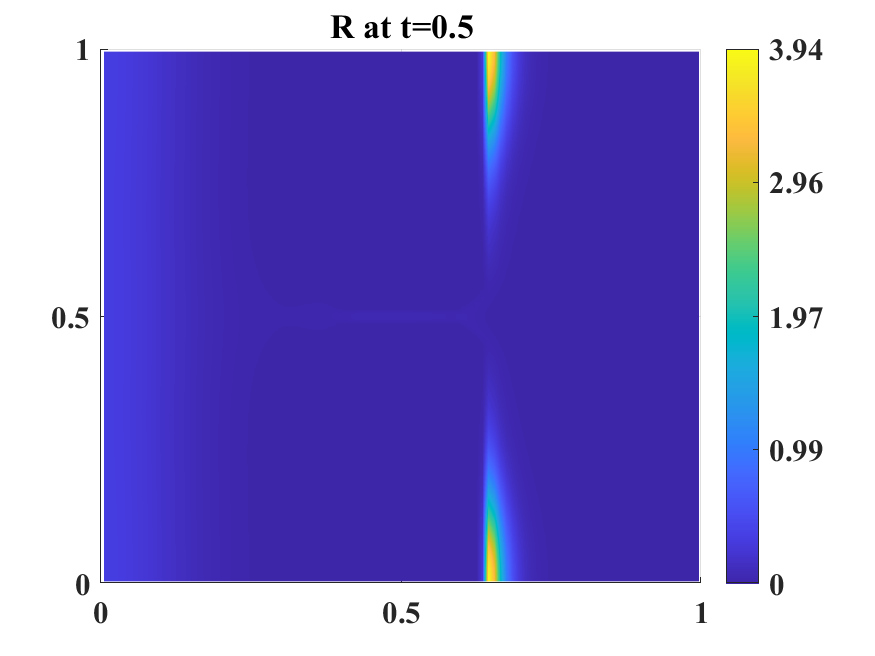}
    }
    \subfloat{
    \includegraphics[width=0.33\linewidth]{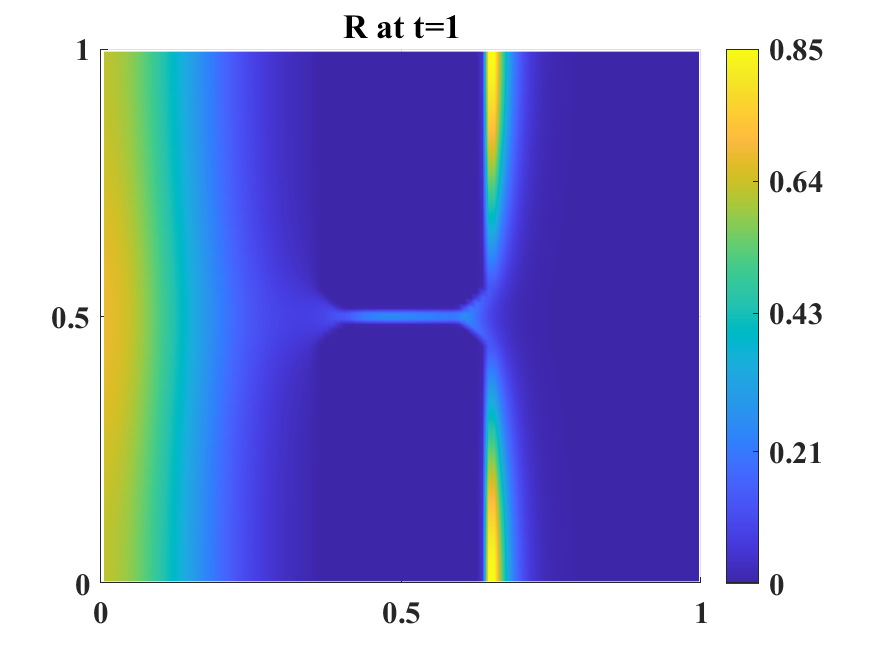}
    }

    \subfloat{
    \includegraphics[width=0.33\linewidth]{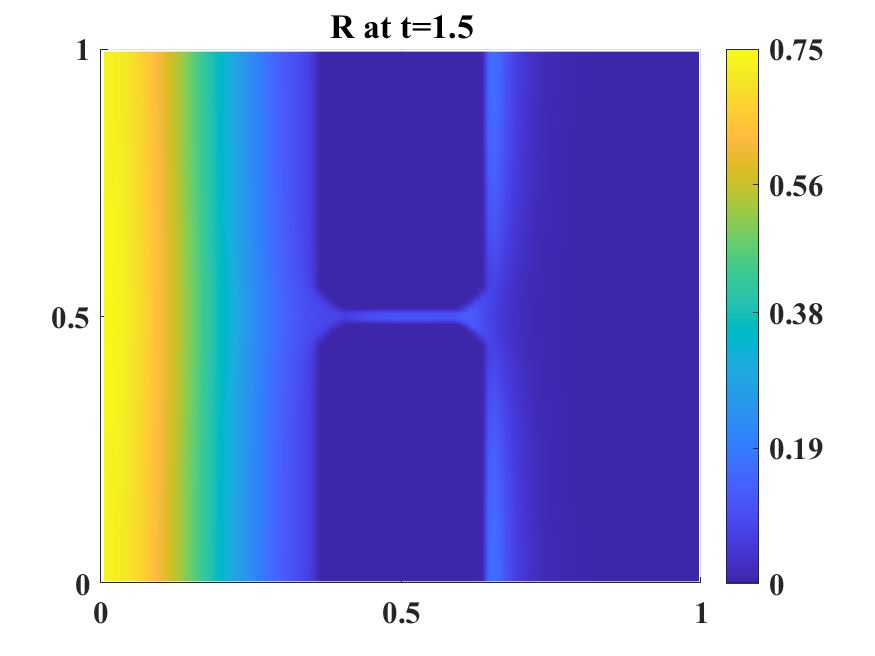}
    }
    \subfloat{
    \includegraphics[width=0.33\linewidth]{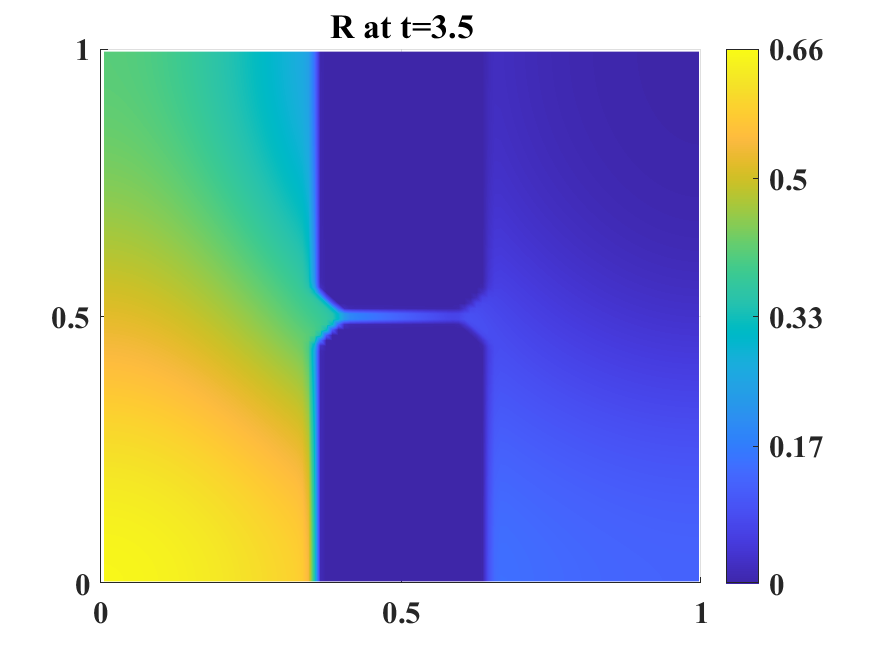}
    }
    \subfloat{
    \includegraphics[width=0.33\linewidth]{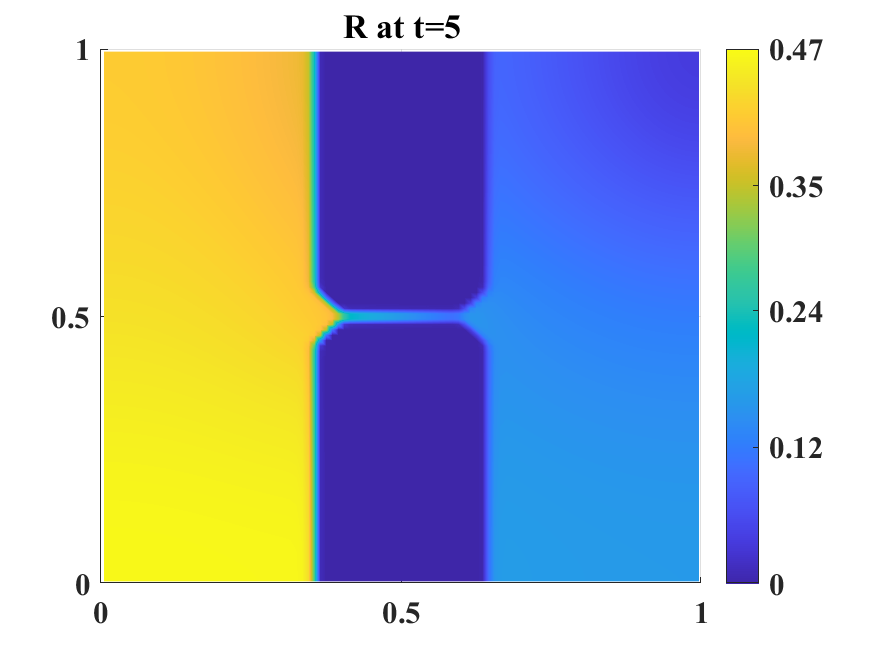}
    }
    \caption{Evolution of $R$ with the presence of an external potential in isolation simulations.}
    \label{fig:Qrt_R}
\end{figure}
The DFT-SIR-DH model~\reff{DH-SIR-DFT}
is applied to simulate the dynamics of susceptible, infectious, and recovered
individuals with the infected ones kept in isolation. An external potential is utilized to reflect the physical isolation:
\begin{equation} \nonumber
\begin{aligned}
    V_j^{\rm ext}(x,y) = -\left[\tanh(100y-55)+\tanh(-100y+45)+2\right]\cdot\left[\tanh(100x-60)+\tanh(-100x+40)\right]
\end{aligned}
\end{equation}
for $j=S, I, R$. Such an external potential divides the simulation domain into two compartments connected by an narrow passage, which is setup to mimic the essential communications.  The simulations start with an initial condition
\begin{equation} \nonumber
S_0(x,y) = \left\{
\begin{aligned}
    0.5,& \enspace x<0.4, \\
    0,& \enspace \text{otherwise},
\end{aligned}
\right. \enspace
I_0(x,y) = \left\{
\begin{aligned}
    0.5,& \enspace x>0.6, \\
    0,& \enspace \text{otherwise},
\end{aligned}
\right. \enspace
R_0(x,y) \equiv 0,
\end{equation}
which mimics a situation that the susceptible and infectious individuals are separated in the left and right compartments, respectively.  
The interaction intensity matrix is given by
\[
W = 
\begin{pmatrix}
1  & 1000 & 1 \\
1000  & 1000 & 1000\\
1 & 1000 & 1
\end{pmatrix}
,
\]
which prescribes that the social interactions between susceptible and recovered individuals are mild while the repulsive interactions between infected individuals and others are remarkably strong.

We present the evolution of susceptible, infected and recovered individuals at various time snapshots in Figure~\ref{fig:Qrt_S}, \ref{fig:Qrt_I}, and \ref{fig:Qrt_R}, respectively. As shown in Figure~\ref{fig:Qrt_S}, the susceptible individuals first quickly escape away from the narrow channel passage connecting  to infectious individuals and concentrate at left boundary of the domain $\Omega$. Due to diffusion as well as the redistribution of infectious individuals, the susceptible individuals gradually move back and distribute evenly in the left region. As seen from Figure~\ref{fig:Qrt_I}, infection first peaks in the narrow passage, as expected. Due to the strong repulsion,   the infected individuals gradually relocate to the right compartment, and eventually fade away as clusters at the northeastern corner. 
Figure~\ref{fig:Qrt_R} illustrates that the recovered individuals who first get infected in the narrow passage move to the right compartment and relocate at the thin layer next to the physical isolation. After that, due to the strong repulsion between the infected ones, the recovered individuals redistribute through the narrow channel and gradually move to the left compartment. In the long run, the recovered individuals mix with susceptible individuals and  distribute evenly in the left compartment. From above results, one can observe that by tuning the interaction intensity matrix, our DFT-SIR-DH model with the local interaction Debye-H\"uckel potential can effectively capture the social interactions between different individuals.

\section{Conclusion} \label{Sec. conclusion}
Social interactions play a pivotal role in epidemic spreading. It is highly desirable to incorporate such interactions in theoretical modeling. This work has proposed a local mean-field density functional theory model by using the sum-of-exponential approximation of convolution kernels for social interactions. Such approximation converts the convolution terms into interaction potentials that are governed by the Debye-H\"uckel equation. Thanks to the local formulation of the proposed model, linear stability analysis has been able to derive a novel instability condition associated with cross interactions. Global existence of the solution to the proposed model with a simplified self-repulsive interaction potential has been established as well. Extensive numerical simulations have been performed to study the impact of social interactions on epidemic spreading, verify the instability conditions obtained from linear stability analysis, and provide theoretical guides for the control of disease spreading. 


\vskip 5mm
\noindent{\bf Acknowledgements.}
This work was supported by National Key R\&D Program of China 2023YFF1204200. The authors thank Professor Youshan Tao  for helpful discussions.

\bibliographystyle{unsrt}
\bibliography{SIR_DDFT}

\end{document}